\newcommand{\pbLength}{10cm}
\newcommand{\LongVersion}[1]{#1}
\newcommand{\ShortVersion}[1]{}
\def\tuple#1{\langle#1\rangle}
\def\eqref#1{(\ref{#1})}
\def\eqrefp#1{(\ref{#1}')}
\newcommand{\mand}{\sqcap}
\newcommand{\mor}{\sqcup}
\newcommand{\V}{\forall}
\newcommand{\E}{\exists}
\newcommand{\mL}{\mathcal{L}}
\newcommand{\mI}{\mathcal{I}}
\newcommand{\mZ}{\mathcal{Z}}
\newcommand{\CN}{\Sigma_C}
\newcommand{\RN}{\Sigma_R}
\newcommand{\RNpm}{\Sigma_R^\pm}
\newcommand{\IN}{\Sigma_I}
\newcommand{\Self}{\mathtt{Self}}
\newcommand{\QU}{\mathtt{Q}}
\newcommand{\SE}{\mathtt{S}}
\newcommand{\mLP}{\mL_\Phi}
\newcommand{\mLPp}{\mL_\Phi^\mathit{pos}}
\newcommand{\mLPpE}{\mL_{\Phi,\E}^\mathit{pos}}
\newcommand{\mLPpV}{\mL_{\Phi,\V}^\mathit{pos}}
\newcommand{\mLPsp}{\mL_\Phi^\mathit{sp}}
\newcommand{\mLPspE}{\mL_{\Phi,\E}^\mathit{sp}}
\newcommand{\mLPspV}{\mL_{\Phi,\V}^\mathit{sp}}
\newcommand{\simLP}{{\sim_{\,\Phi,\mI}}}
\newcommand{\equivLP}{{\equiv_{\,\Phi,\mI}}}
\newcommand{\leqPp}{\leq_\Phi^\mathit{pos}}
\newcommand{\leqPsp}{\leq_\Phi^\mathit{sp}}
\newcommand{\equivPsp}{\equiv^\mathit{sp}_{\Phi}}
\newcommand{\equivP}{\equiv_{\Phi}}
\newcommand{\ALCreg}{$\mathcal{ALC}_{reg}$\xspace}
\def\ramka#1{\begin{center}
\fbox{\parbox{11.90cm}{\begin{center}#1\end{center}}}
\end{center}}
\def\trojkat{\mbox{{\scriptsize$\!\vartriangleleft$}}}
\newcommand{\myend}{\mbox{}\hfill\trojkat}
\newcommand{\comment}[1]{}
\newcommand{\TidyPsp}{\mathsf{Tidy}^{sp}_\Phi}
\title{Bisimulation-Based Comparisons\\ for Interpretations in Description Logics}
\titlerunning{Bisimulation-Based Comparisons for Interpretations}
\author{Ali Rezaei Divroodi \and Linh Anh Nguyen}
\institute{
Institute of Informatics, University of Warsaw\\
Banacha 2, 02-097 Warsaw, Poland\\
\email{\{rezaei,nguyen\}@mimuw.edu.pl}
}
\authorrunning{A.R. Divroodi and L.A. Nguyen}
\begin{document}
\maketitle
\sloppy

\begin{abstract}
We study comparisons between interpretations in description logics with respect to ``logical consequences'' of the form of semi-positive concepts (like semi-positive concept assertions). Such comparisons are characterized by conditions similar to the ones of bisimulations. The simplest among the considered logics is a variant of PDL (propositional dynamic logic). The others extend that logic with inverse roles, nominals, quantified number restrictions, the universal role, and/or the concept constructor for expressing the local reflexivity of a role. The studied problems are: preservation of semi-positive concepts with respect to comparisons, the Hennessy-Milner property for comparisons, and minimization of interpretations that preserves semi-positive concepts.
\end{abstract}

\section{Introduction}

Bisimulation is a natural notion of equivalence arose in modal logic~\cite{vBenthem76,vBenthem83,vBenthem84} and state transition systems~\cite{Park81,HennessyM85}. It can be viewed as a binary relation associating state transition systems which behave in the same way in the sense that one system simulates the other and vice versa. Kripke models in modal logic are a special case of labeled state transition systems. 

Bisimulations have widely been studied for various variants of modal logic like dynamic logic, temporal logic, hybrid logic and, in particular, also for description logics (DLs) \cite{KurtoninaR99,BSDL-CSP,LutzPW11}. 
They have been used for analyzing the expressivity of a wide range of modal logics (see, e.g., \cite{BRV2001} for details), for minimizing state transition systems, as well as for concept learning in DLs (e.g., \cite{LbRoughification,KSE2012,SoICT2012,DivroodiHNN12}). 

Bisimilarity between two states is usually defined by three conditions (the states have the same label, each transition from one of the states can be simulated by a similar transition from the other, and vice versa). For bisimulation between two pointed-models, the initial states of the models are also required to be bisimilar. When converse is allowed, two additional conditions are required for bisimulation~\cite{BRV2001}. Bisimulation conditions for dealing with graded modalities were studied in~\cite{Rijke00,ConradieThesis,JaninL04}. In the field of hybrid logic, the bisimulation condition for dealing with nominals is well known (see, e.g., \cite{ArecesBM01}). In DLs, such conditions are used for dealing with inverse roles, (quantified) number restrictions and nominals, respectively. There are also bisimulation conditions for dealing with individuals, the universal role and the $\Self$ constructor in DLs~\cite{BSDL-arxiv,KSE2012}. 

In modal logic, bisimulation invariance has the form: if two states are bisimilar then they satisfy the same set of formulas (i.e., all modal formulas are invariant w.r.t.\ bisimulation). For the converse, the Hennessy-Milner property states that, in finitely branching Kripke models, two states are bisimilar iff they satisfy the same set of formulas. This property can be generalized for non-finitely branching Kripke models (see, e.g., \cite{LutzPW11}). 

Simulation is a notion with weaker conditions than bisimulation. It is only ``one way'', while bisimulation is ``two way''. In the most common understanding, the ``ways'' are related with the ``transitions'' but not w.r.t. comparison between the sets of atomic formulas satisfied at the considered states. Such simulation preserves positive existential formulas (see, e.g., \cite{BRV2001}).

What variant of bisimulation can be used to talk about preservation of positive formulas, which may use both existential and universal modal operators? Defining positive formulas to be the ones without $\bot$ (falsity), $\lnot$ (negation) and $\to$ (implication), in~\cite{nguyen00least} Nguyen gave a bisimulation-based comparison between Kripke models that preserves positive formulas in basic serial monomodal logics. In~\cite{nguyen08igpl} he extended the preservation result also for serial regular grammar logics and proved the corresponding Hennessy-Milner property. Such bisimulation-based comparison uses the conditions of bisimulation for ``transitions'' and compares the sets of atomic formulas satisfied at the considered states. Bisimulation-based comparison between Kripke models is worth studying, because it can be used for minimizing a Kripke model w.r.t.\ the set of logical consequences being positive formulas. For example, after constructing a least Kripke model of a positive modal logic program in a serial modal logic~\cite{nguyen00least,nguyen08igpl,KNS2010jar}, one can minimize it w.r.t.\ positive formulas to obtain a minimal Kripke model that characterizes the program w.r.t. positive consequences. Such minimization is also applicable to (non-serial) DLs~\cite{nguyen06jelia,RG-Nguyen}. 

In this paper, we study bisimulation-based comparisons between interpretations in DLs. The simplest among the considered logics is \ALCreg, a variant of PDL (propositional dynamic logic). The others extend that logic with inverse roles, nominals, quantified number restrictions, the universal role, and/or the concept constructor for expressing the local reflexivity of a role. The studied problems are: preservation of semi-positive concepts with respect to comparisons, the Hennessy-Milner property for comparisons, and minimization of interpretations that preserves semi-positive concepts. The class of semi-positive concepts differs from the class of positive concepts in that, in the recursive definition, it allows also $\bot$. This is involved with non-seriality. 

Apart from~\cite{nguyen00least,nguyen08igpl,KNS2010jar}, bisimulation-based comparisons for modal logics were studied also in~\cite{abs-1303-2467} (and possibly other works). In~\cite{abs-1303-2467} the notion is studied at an abstract level for coalgebraic modal logics under the name $\Lambda$-simulation, and the term ``positive formula'' is used instead of ``semi-positive formula''. As mentioned before, the term ``simulation'' traditionally has another meaning, and in our opinion $\bot$ should not be referred to as ``positive''. At an abstract level, the work~\cite{abs-1303-2467} does not have a result like a Hennessy-Milner property. In the current work, to guarantee a Hennessy-Milner property, roles in semi-positive concepts have a specific syntax due to the presence of the test operator. The definition of semi-positive concepts itself in the current work is not trivial (e.g., we have that if $C$ is a semi-positive concept then $\leq\!n\,r.\lnot C$ is also a positive concept). 

Our results on preservation of semi-positive concepts and the Hennessy-Milner property w.r.t. comparisons may overlap to a certain degree with the known ones (we will carefully check this later). However, our results on ``characterizing bisimulation by semi-positive concepts'' and ``minimization preserving semi-positive concepts'' are completely novel. 


\section{Notation and Semantics of Description Logics}

Our languages use a finite set $\CN$ of {\em concept names} (atomic concepts), a finite set $\RN$ of {\em role names} (atomic roles), and a finite set $\IN$ of {\em individual names}. Let $\Sigma = \CN \cup \RN \cup \IN$. We denote concept names by letters like $A$ and $B$, denote role names by letters like $r$ and $s$, and denote individual names by letters like $a$ and $b$.

We consider some (additional) {\em DL-features} denoted by $I$ ({\em inverse}), $O$ ({\em nominal}), $Q$ ({\em quantified number restriction}), $U$ ({\em universal role}), $\Self$. A {\em set of DL-features} is a set consisting of some or zero of these names.

Let $\Phi$ be any set of DL-features and let $\mL$ stand for \ALCreg. The DL language $\mLP$ allows {\em roles} and {\em concepts} defined inductively as follows:
\begin{itemize}
\item if $r \in \RN$ then $r$ is a role of $\mLP$
\item if $A \in \CN$ then $A$ is a concept of $\mLP$
\item if $R$ and $S$ are roles of $\mLP$ and $C$ is a concept of $\mLP$ then
   \begin{itemize}
   \item $\varepsilon$, $R \circ S$ , $R \sqcup S$, $R^*$ and $C?$ are roles of $\mLP$
   \item $\top$, $\bot$, $\lnot C$, $C \mor D$, $C \mand D$, $\E R.C$ and $\V R.C$ are concepts of $\mLP$
   \item if $I \in \Phi$ then $R^-$ is a role of $\mLP$
   \item if $O \in \Phi$ and $a \in \Sigma_I$ then $\{a\}$ is a concept of $\mLP$
   \item if $Q \in \Phi$, $r \in \RN$ and $n$ is a natural number\\ then $\geq n\,r.C$ and $\leq n\,r.C$ are concepts of $\mLP$
   \item if $\{Q,I\} \subseteq \Phi$, $r \in \RN$ and $n$ is a natural number\\ then $\geq n\,r^-.C$ and $\leq n\,r^-.C$ are concepts of $\mLP$
   \item if $U \in \Phi$ then $U$ is a role of $\mLP$ 
   \item if $\Self \in \Phi$ and $r \in \RN$ then $\E r.\Self$ is a concept of $\mLP$.
   \end{itemize}
\end{itemize}

We use letters like $R$ and $S$ to denote arbitrary roles, and use letters like $C$ and $D$ to denote arbitrary concepts.
\LongVersion{A role stands for a binary relation, while a concept stands for a unary relation.

The intended meaning of the role constructors is the following:
\begin{itemize}
\item $R \circ S$ stands for the sequential composition of $R$ and $S$,
\item $R \sqcup S$ stands for the set-theoretical union of $R$ and $S$,
\item $R^*$ stands for the reflexive and transitive closure of $R$,
\item $C?$ stands for the test operator (as of PDL),
\item $R^-$ stands for the {\em inverse} of $R$.
\end{itemize}

The concept constructors $\E R.C$ and $\V R.C$ correspond respectively to the modal operators $\tuple{R}C$ and $[R]C$ of PDL.
The concept constructors $\geq n\,R.C$ and $\leq n\,R.C$ are called {\em quantified number restrictions}. They correspond to graded modal operators.

} 
We refer to elements of $\RN$ also as {\em atomic roles}. Let $\RNpm = \RN \cup \{r^- \mid r \in \RN\}$. From now on, by {\em basic roles} we refer to elements of $\RNpm$ if the considered language allows inverse roles, and refer to elements of $\RN$ otherwise. In general, the language decides whether inverse roles are allowed in the considered context.

An {\em interpretation} $\mI = \langle \Delta^\mI, \cdot^\mI \rangle$ consists of a non-empty set $\Delta^\mI$, called the {\em domain} of $\mI$, and a function $\cdot^\mI$, called the {\em interpretation function} of $\mI$, which maps every concept name $A$ to a subset $A^\mI$ of $\Delta^\mI$, maps every role name $r$ to a binary relation $r^\mI$ on $\Delta^\mI$, and maps every individual name $a$ to an element $a^\mI$ of~$\Delta^\mI$.
The interpretation function $\cdot^\mI$ is extended to complex roles and complex concepts as shown in Figure~\ref{fig: int-comp}, where $\#\Gamma$ stands for the cardinality of the set $\Gamma$.
We write $C^\mI(x)$ to denote $x \in C^\mI$, and write $R^\mI(x,y)$ to denote $\tuple{x,y} \in R^\mI$.

An interpretation $\mI$ is said to be {\em serial} in $\mLP$ if, for every basic role $R$ of $\mLP$ and every $x \in \Delta^\mI$, there exists $y \in \Delta^\mI$ such that $\tuple{x,y} \in R^\mI$.   

\begin{figure}[t]
\ramka{ \(
\begin{array}{c}
\begin{array}{rcl}
(R \circ S)^\mI & = & R^\mI \circ S^\mI \\[0.5ex]
(R \sqcup S)^\mI & = & R^\mI \cup S^\mI \\[0.5ex]
(R^*)^\mI & = & (R^\mI)^* \\[0.5ex]
(C?)^\mI & = & \{ \tuple{x,x} \mid C^\mI(x) \} \\[0.5ex]
\varepsilon^\mI & = & \{\tuple{x,x} \mid x \in \Delta^\mI\} \\[0.5ex]
U^\mI & = & \Delta^\mI \times \Delta^\mI \\[0.5ex]
(R^-)^\mI & = & (R^\mI)^{-1}
\end{array}
\quad\quad\quad
\begin{array}{rcl}
\top^\mI & = & \Delta^\mI \\[0.5ex]
\bot^\mI & = & \emptyset \\[0.5ex]
(\lnot C)^\mI & = & \Delta^\mI \setminus C^\mI \\[0.5ex]
(C \mor D)^\mI & = & C^\mI \cup D^\mI \\[0.5ex]
(C \mand D)^\mI & = & C^\mI \cap D^\mI \\[0.5ex]
\{a\}^\mI & = & \{a^\mI\} \\[0.5ex]
(\E r.\Self)^\mI & = & \{x \in \Delta^\mI \mid r^\mI(x,x)\}
\end{array} \\
\\[-1ex]
\begin{array}{rcl}
(\E R.C)^\mI & = & \{ x \in \Delta^\mI \mid \E y\,[R^\mI(x,y) \textrm{ and } C^\mI(y)] \\[1ex]
(\V R.C)^\mI & = & \{ x \in \Delta^\mI \mid \V y\,[R^\mI(x,y) \textrm{ implies } C^\mI(y)] \} \\[1ex]
(\geq n\,R.C)^\mI & = & \{x \in \Delta^\mI \mid \#\{y \mid R^\mI(x,y) \textrm{ and } C^\mI(y)\} \geq n \} \\[1ex]
(\leq n\,R.C)^\mI & = & \{x \in \Delta^\mI \mid \#\{y \mid R^\mI(x,y) \textrm{ and } C^\mI(y)\} \leq n \}
\end{array}
\end{array}
\) }\vspace{-0.5em}
\caption{Interpretation of complex roles and complex concepts.}
\label{fig: int-comp}
\end{figure}

We say that a role $R$ is in the {\em converse normal form} (CNF) if the inverse constructor is applied in $R$ only to role names and the role $U$ is not under the scope of any other role constructor. Since every role can be translated to an equivalent role in CNF,\footnote{For example, $((r \mor s^-)\circ r^*)^- = (r^-)^* \circ (r^- \mor s)$.} in this paper we assume that roles are presented in the CNF.


\section{Positive and Semi-Positive Concepts}
\label{section: poscon}

Let $\mLPp$ be the smallest set of concepts and $\mLPpE$, $\mLPpV$ be the smallest sets of roles defined recursively as follows:
\begin{itemize}
\item if $r \in \RN$ then $r$ is a role of $\mLPpE$ and $\mLPpV$,
\item if $I \in \Phi$ and $r \in \RN$ then $r^-$ is a role of $\mLPpE$ and $\mLPpV$,
\item if $R$ and $S$ are roles of $\mLPpE$ and $C$ is a concept of $\mLPp$\\ then $\varepsilon$, $R \circ S$ , $R \sqcup S$, $R^*$ and $C?$ are roles of $\mLPpE$,
\item if $R$ and $S$ are roles of $\mLPpV$ and $C$ is a concept of $\mLPp$\\ then $\varepsilon$, $R \circ S$ , $R \sqcup S$, $R^*$ and $(\lnot C)?$ are roles of $\mLPpV$,
\item if $A \in \CN$ then $A$ is a concept of $\mLPp$,
\item if $O \in \Phi$ and $a \in \Sigma_I$ then $\{a\}$ is a concept of $\mLPp$,
\item if $\Self \in \Phi$ and $r \in \RN$ then $\E r.\Self$ is a concept of $\mLPp$,
\item if $C$ is a concept of $\mLPp$, $R$ is a role of $\mLPpE$ and $S$ is a role of $\mLPpV$ then
   \begin{itemize}
   \item $\top$, $C \mor D$, $C \mand D$, $\E R.C$ and $\V S.C$ are concepts of $\mLPp$,
   \item if $Q \in \Phi$, $r \in \RN$ and $n$ is a natural number\\ then $\geq n\,r.C$ and $\leq n\,r.(\lnot C)$ are concepts of $\mLPp$,
   \item if $\{Q,I\} \subseteq \Phi$, $r \in \RN$ and $n$ is a natural number\\ then $\geq n\,r^-.C$ and $\leq n\,r^-.(\lnot C)$ are concepts of $\mLPp$,
   \item if $U \in \Phi$ then $\V U.C$ and $\E U.C$ are concepts of $\mLPp$.
   \end{itemize}
\end{itemize}

A concept of $\mLPp$ is called a {\em positive concept} of $\mLP$.
We introduce both $\mLPpV$ and $\mLPpE$ due to the test constructor of roles. The concepts $\E (A?).B$ and $\V ((\lnot A)?).B$ are positive concepts; they are equivalent to $A \mand B$ and $A \mor B$, respectively. That the concept $\leq\!n\,R.(\lnot A)$ is positive should not be a surprise, as $\V R.A$ is equivalent to $\leq\!0\,R.(\lnot A)$.

Let $\mLPsp$ be the smallest set of concepts and $\mLPspE$, $\mLPspV$ be the smallest sets of roles defined analogously to the case of $\mLPp$, $\mLPpE$, $\mLPpV$ except that $\bot$ is also allowed as a concept of $\mLPsp$. 
We call concepts of $\mLPsp$ {\em semi-positive concepts} of~$\mLP$.

\section{Bisimulation-Based Comparisons for Interpretations}
\label{section: bis-inv}

Let $\mI$ and $\mI'$ be interpretations.
A binary relation $Z \subseteq \Delta^\mI \times \Delta^{\mI'}$ is called an {\em $\mLP$-comparison} between $\mI$ and $\mI'$ if the following conditions hold for every $a \in \Sigma_I$, $A \in \CN$, $r \in \RN$, $x,y \in \Delta^\mI$, $x',y' \in \Delta^{\mI'}\,$:
\begin{eqnarray}
&& Z(a^\mI,a^{\mI'}) \label{bs:eqA} \\
&& Z(x,x') \Rightarrow [A^\mI(x) \Rightarrow A^{\mI'}(x')] \label{bs:eqB} \\
&& [Z(x,x') \land r^\mI(x,y)] \Rightarrow \E y' \in \Delta^{\mI'}[Z(y,y') \land r^{\mI'}(x',y')] \label{bs:eqC} \\
&& [Z(x,x') \land r^{\mI'}(x',y')] \Rightarrow \E y \in \Delta^\mI [Z(y,y') \land r^\mI(x,y)], \label{bs:eqD}
\end{eqnarray}
if $I \in \Phi$ then
\begin{eqnarray}
&& [Z(x,x') \land r^\mI(y,x)] \Rightarrow \E y' \in \Delta^{\mI'}[Z(y,y') \land r^{\mI'}(y',x')] \label{bs:eqI1} \\
&& [Z(x,x') \land r^{\mI'}(y',x')] \Rightarrow \E y \in \Delta^\mI [Z(y,y') \land r^\mI(y,x)], \label{bs:eqI2}
\end{eqnarray}
if $O \in \Phi$ then
\begin{eqnarray}
&& Z(x,x') \Rightarrow [x = a^\mI \Rightarrow x' = a^{\mI'}], \label{bs:eqO0}
\end{eqnarray}
if $Q \in \Phi$ then
\begin{eqnarray}
&& \parbox{\pbLength}{if $Z(x,x')$ holds then, for every role name $r$, there exists a bijection $h: \{y \mid r^\mI(x,y)\} \to \{y' \mid r^{\mI'}(x',y')\}$ such that $h \subseteq Z$,} \label{bs:eqQ}
\end{eqnarray}
if $\{Q,I\} \subseteq \Phi$ then (additionally)
\begin{eqnarray}
&& \parbox{\pbLength}{if $Z(x,x')$ holds then, for every role name $r$, there exists a bijection $h: \{y \mid r^\mI(y,x)\} \to \{y' \mid r^{\mI'}(y',x')\}$ such that $h \subseteq Z$,} \label{bs:eqQI}
\end{eqnarray}
if $U \in \Phi$ then
\begin{eqnarray}
&& \V x \in \Delta^\mI\, \E x' \in \Delta^{\mI'}\, Z(x,x') \label{bs:eqU1} \\
&& \V x' \in \Delta^{\mI'}\, \E x \in \Delta^\mI\, Z(x,x'), \label{bs:eqU2}
\end{eqnarray}
if $\Self \in \Phi$ then
\begin{eqnarray}
&& Z(x,x') \Rightarrow [r^\mI(x,x) \Rightarrow r^{\mI'}(x',x')]. \label{bs:eqSelf}
\end{eqnarray}

For example, if $\Phi = \{Q,I\}$ then only the conditions \eqref{bs:eqA}-\eqref{bs:eqI2}, \eqref{bs:eqQ} and \eqref{bs:eqQI} (and all of them) are essential.

By \eqrefp{bs:eqB}, \eqrefp{bs:eqO0}, \eqrefp{bs:eqSelf} we denote the conditions obtained respectively from \eqref{bs:eqB}, \eqref{bs:eqO0}, \eqref{bs:eqSelf} by replacing the second implication ($\Rightarrow$) by equivalence ($\Leftrightarrow$). 
If the conditions \eqref{bs:eqB}, \eqref{bs:eqO0}, \eqref{bs:eqSelf} are replaced by \eqrefp{bs:eqB}, \eqrefp{bs:eqO0}, \eqrefp{bs:eqSelf} then the relation $Z$ is called an {\em $\mLP$-bisimulation} between $\mI$ and~$\mI'$~\cite{BSDL-arxiv}. 


\begin{proposition} \label{prop:pr-bis}
\
\begin{enumerate}
\item \label{pr-bis-1} The relation $\{\tuple{x,x} \mid x \in \Delta^\mI\}$ is an $\mLP$-comparison between $\mI$ and $\mI$.
\item \label{pr-bis-3} If $Z_1$ is an $\mLP$-comparison between $\mI_0$ and $\mI_1$, and $Z_2$ is an $\mLP$-comparison between $\mI_1$ and $\mI_2$, then $Z_1 \circ Z_2$ is an $\mLP$-comparison between $\mI_0$ and $\mI_2$.
\item \label{pr-bis-4} If $\mZ$ is a set of $\mLP$-comparison between $\mI$ and $\mI'$ then $\bigcup\mZ$ is also an $\mLP$-comparison between $\mI$ and $\mI'$.
\end{enumerate}
\end{proposition}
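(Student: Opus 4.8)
The plan is to verify each of the three closure properties directly against the defining conditions \eqref{bs:eqA}--\eqref{bs:eqSelf}, checking them one DL-feature at a time. Since the set of conditions depends on $\Phi$, I would prove each part by going through the ``core'' conditions \eqref{bs:eqA}--\eqref{bs:eqD} first, and then through the conditional blocks for $I$, $O$, $Q$, $\{Q,I\}$, $U$ and $\Self$; the arguments are uniform enough that I would present the core cases in detail and remark that the feature-dependent cases are analogous.

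\textbf{Part \ref{pr-bis-1}.} Let $Z = \{\tuple{x,x} \mid x \in \Delta^\mI\}$ be the identity relation on $\Delta^\mI$. Condition \eqref{bs:eqA} holds because $a^\mI = a^\mI$, so $Z(a^\mI, a^\mI)$. For \eqref{bs:eqB}, if $Z(x,x')$ then $x = x'$, so $A^\mI(x)$ trivially implies $A^\mI(x')$; and likewise \eqref{bs:eqO0} and \eqref{bs:eqSelf} hold since $x = x'$. For the forth/back conditions \eqref{bs:eqC} and \eqref{bs:eqD}: if $Z(x,x')$ and $r^\mI(x,y)$ then $x' = x$, and we may take $y' = y$, which gives $Z(y,y')$ and $r^\mI(x',y')$; symmetrically for \eqref{bs:eqD}, and for the inverse-role conditions \eqref{bs:eqI1}, \eqref{bs:eqI2}. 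For \eqref{bs:eqQ} (and \eqref{bs:eqQI}), when $x = x'$ the two sets $\{y \mid r^\mI(x,y)\}$ and $\{y' \mid r^\mI(x',y')\}$ coincide, so the identity map on that set is the required bijection $h \subseteq Z$. Finally \eqref{bs:eqU1} and \eqref{bs:eqU2} hold because every $x$ is $Z$-related to itself.

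\textbf{Part \ref{pr-bis-3}.} Let $Z = Z_1 \circ Z_2$, i.e.\ $Z(x,x'')$ iff there is $x_1 \in \Delta^{\mI_1}$ with $Z_1(x,x_1)$ and $Z_2(x_1,x'')$. Condition \eqref{bs:eqA}: from $Z_1(a^{\mI_0}, a^{\mI_1})$ and $Z_2(a^{\mI_1}, a^{\mI_2})$ we get $Z(a^{\mI_0}, a^{\mI_2})$. Condition \eqref{bs:eqB}: if $Z(x,x'')$ via witness $x_1$, and $A^{\mI_0}(x)$, then \eqref{bs:eqB} for $Z_1$ gives $A^{\mI_1}(x_1)$ and then \eqref{bs:eqB} for $Z_2$ gives $A^{\mI_2}(x'')$; the one-directional conditions \eqref{bs:eqO0}, \eqref{bs:eqSelf} chain in exactly the same way. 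For \eqref{bs:eqC}: given $Z(x,x'')$ via $x_1$ and $r^{\mI_0}(x,y)$, apply \eqref{bs:eqC} for $Z_1$ to obtain $y_1$ with $Z_1(y,y_1)$ and $r^{\mI_1}(x_1,y_1)$, then apply \eqref{bs:eqC} for $Z_2$ to obtain $y''$ with $Z_2(y_1,y'')$ and $r^{\mI_2}(x'',y'')$; then $Z(y,y'')$ via witness $y_1$. Condition \eqref{bs:eqD} (and the inverse-role conditions \eqref{bs:eqI1}, \eqref{bs:eqI2}) are symmetric. For \eqref{bs:eqU1}: given $x \in \Delta^{\mI_0}$, get $x_1 \in \Delta^{\mI_1}$ with $Z_1(x,x_1)$ by \eqref{bs:eqU1} for $Z_1$, then $x'' \in \Delta^{\mI_2}$ with $Z_2(x_1,x'')$ by \eqref{bs:eqU1} for $Z_2$; hence $Z(x,x'')$. \eqref{bs:eqU2} is symmetric. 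The only step that takes a moment of thought is the number-restriction condition \eqref{bs:eqQ}: if $Z(x,x'')$ via $x_1$, take the bijection $h_1 : \{y \mid r^{\mI_0}(x,y)\} \to \{y_1 \mid r^{\mI_1}(x_1,y_1)\}$ with $h_1 \subseteq Z_1$ and the bijection $h_2 : \{y_1 \mid r^{\mI_1}(x_1,y_1)\} \to \{y'' \mid r^{\mI_2}(x'',y'')\}$ with $h_2 \subseteq Z_2$; then $h := h_2 \circ h_1$ is a bijection between the outer two sets, and $h \subseteq Z_1 \circ Z_2 = Z$. The same works for \eqref{bs:eqQI}.

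\textbf{Part \ref{pr-bis-4}.} Let $Z = \bigcup \mZ$. Condition \eqref{bs:eqA} holds because it holds for any single $Z_0 \in \mZ$ (assuming $\mZ \neq \emptyset$; if $\mZ = \emptyset$ the statement either is vacuous or we take the convention that $\bigcup\emptyset = \emptyset$, in which case \eqref{bs:eqA} would fail unless $\Sigma_I = \emptyset$ — I would note this edge case, or simply read the claim as being about nonempty $\mZ$). For \eqref{bs:eqB}, \eqref{bs:eqO0}, \eqref{bs:eqSelf}: if $Z(x,x')$ then $Z_0(x,x')$ for some $Z_0 \in \mZ$, and the implication holds for $Z_0$. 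For the forth condition \eqref{bs:eqC}: if $Z(x,x')$ and $r^\mI(x,y)$, pick $Z_0 \in \mZ$ with $Z_0(x,x')$; then \eqref{bs:eqC} for $Z_0$ gives $y'$ with $Z_0(y,y')$ and $r^{\mI'}(x',y')$, and $Z_0(y,y')$ implies $Z(y,y')$. Conditions \eqref{bs:eqD}, \eqref{bs:eqI1}, \eqref{bs:eqI2} are analogous. For \eqref{bs:eqU1}, \eqref{bs:eqU2}: again pick any $Z_0 \in \mZ$ and use that it satisfies these conditions, noting $Z_0 \subseteq Z$. The subtle case is again \eqref{bs:eqQ}: if $Z(x,x')$, pick $Z_0 \in \mZ$ with $Z_0(x,x')$; then there is a bijection $h : \{y \mid r^\mI(x,y)\} \to \{y' \mid r^{\mI'}(x',y')\}$ with $h \subseteq Z_0 \subseteq Z$, and this same $h$ witnesses \eqref{bs:eqQ} for $Z$. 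Likewise for \eqref{bs:eqQI}. I expect the main (really the only) obstacle to be precisely this point about number restrictions: the witnessing bijection must come from a \emph{single} member of the family, since a union of bijections need not be a bijection — and similarly in Part \ref{pr-bis-3} one must observe that the composition of the two witnessing bijections is a bijection contained in the composed relation. Everything else is a routine unwinding of definitions.
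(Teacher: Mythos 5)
Your verification is correct and is exactly the routine check the paper has in mind (the paper omits the proof entirely, remarking only that it is straightforward). You rightly flag the two points that need any care at all — composing and selecting the witnessing bijections for conditions \eqref{bs:eqQ} and \eqref{bs:eqQI}, and the degenerate case $\mZ = \emptyset$ in part~\ref{pr-bis-4} — so nothing is missing.
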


The proof of this proposition is straightforward.

\begin{lemma} \label{lemma: bs-inv-1}
Let $\mI$ and $\mI'$ be interpretations and $Z$ be an $\mLP$-comparison between $\mI$ and $ \mI'$. Then the following properties hold for every concept $C$ of $\mLPsp$, every role $R$ of $\mLPspE$, every role $S$ of $\mLPspV$, every $x, y \in \Delta^\mI$, every $x', y' \in \Delta^{\mI'}$, and every $a \in \mI$:
\begin{eqnarray}
&& Z(x,x') \Rightarrow [C^\mI(x) \Rightarrow C^{\mI'}(x')] \label{bs:eqB-2} \\
&& [Z(x,x') \land R^\mI(x,y)] \Rightarrow \E y' \in \Delta^{\mI'}[Z(y,y') \land R^{\mI'}(x',y')] \label{bs:eqC-2} \\
&& [Z(x,x') \land S^{\mI'}(x',y')] \Rightarrow \E y \in \Delta^\mI [Z(y,y') \land S^\mI(x,y)] \label{bs:eqD-2}.
\end{eqnarray}
\end{lemma}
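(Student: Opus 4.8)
The plan is to prove the three properties \eqref{bs:eqB-2}, \eqref{bs:eqC-2}, \eqref{bs:eqD-2} simultaneously by mutual induction on the structure of the semi-positive concept $C$, the role $R$ of $\mLPspE$, and the role $S$ of $\mLPspV$. These three syntactic categories are defined by a single simultaneous recursion, so the induction must be carried out on all of them at once; the measure is the total size of the concept/role expression, counting nested subconcepts inside tests and number restrictions.

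\textbf{Base cases.} For concepts: $A \in \CN$ is exactly condition \eqref{bs:eqB}; $\{a\}$ follows from \eqref{bs:eqO0}; $\E r.\Self$ follows from \eqref{bs:eqSelf}; $\top$ is trivial; and $\bot$ is vacuous (the antecedent $\bot^\mI(x)$ is false), which is precisely where semi-positivity over positivity is used. For roles: an atomic role $r$ satisfies \eqref{bs:eqC-2} by \eqref{bs:eqC} and, when $I \in \Phi$, an inverse role $r^-$ satisfies it by \eqref{bs:eqI1}; dually for \eqref{bs:eqD-2} using \eqref{bs:eqD} and \eqref{bs:eqI2}. Note the built-in asymmetry: $\mLPspE$-roles are used only under $\E$ (and so need only the ``forward'' transfer \eqref{bs:eqC-2}), while $\mLPspV$-roles are used only under $\V$ (and need only \eqref{bs:eqD-2}); this keeps the test-operator cases consistent.

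\textbf{Inductive steps.} For the Boolean concept constructors $C \mor D$ and $C \mand D$, the property \eqref{bs:eqB-2} follows immediately from the induction hypotheses on $C$ and $D$ together with $\cup$ and $\cap$ being monotone. For $\E R.C$: assume $Z(x,x')$ and $(\E R.C)^\mI(x)$, pick a witness $y$ with $R^\mI(x,y)$ and $C^\mI(y)$, apply the IH \eqref{bs:eqC-2} for $R$ to get $y'$ with $Z(y,y')$ and $R^{\mI'}(x',y')$, then apply the IH \eqref{bs:eqB-2} for $C$ to get $C^{\mI'}(y')$. For $\V S.C$: assume $Z(x,x')$ and $(\V S.C)^\mI(x)$, take any $y'$ with $S^{\mI'}(x',y')$, use the IH \eqref{bs:eqD-2} for $S$ (this is why $\mLPspV$ needs the backward clause) to find $y$ with $Z(y,y')$ and $S^\mI(x,y)$, conclude $C^\mI(y)$ and then $C^{\mI'}(y')$ by the IH for $C$. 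For $\E U.C$ and $\V U.C$ (when $U \in \Phi$) the argument is the same but uses \eqref{bs:eqU1}/\eqref{bs:eqU2} in place of the transition conditions. For the role constructors, \eqref{bs:eqC-2} and \eqref{bs:eqD-2} are proved by the standard bisimulation-style arguments: $\varepsilon$ uses reflexivity of $Z$-matched pairs trivially; $R \sqcup S$ splits on which disjunct holds; $R \circ S$ composes two applications of the IH; $R^*$ requires an inner induction on the length of the $R$-path realizing $(R^*)^\mI(x,y)$, repeatedly applying the IH for $R$; for the test $C?$ in an $\mLPspE$-role, $C?^\mI(x,y)$ forces $y = x$ and $C^\mI(x)$, so one transfers $C$ via the IH \eqref{bs:eqB-2} and takes $y' = x'$; for $(\lnot C)?$ in an $\mLPspV$-role one argues contrapositively, using that $(\lnot C)?^{\mI'}(x',y')$ means $y' = x'$ and $\lnot C^{\mI'}(x')$, and if $C^\mI(x)$ held the IH would give $C^{\mI'}(x')$.

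\textbf{The quantified number restrictions.} When $Q \in \Phi$ the nontrivial cases are $\geq\!n\,r.C$ and $\leq\!n\,r.(\lnot C)$ (and their inverse-role variants when $\{Q,I\}\subseteq\Phi$). Here I would use condition \eqref{bs:eqQ}: from $Z(x,x')$ we get a bijection $h$ between the $r$-successors of $x$ and those of $x'$ with $h \subseteq Z$. For $\geq\!n\,r.C$, the set $\{y : r^\mI(x,y) \land C^\mI(y)\}$ maps under $h$ into $\{y' : r^{\mI'}(x',y') \land C^{\mI'}(y')\}$ by the IH for $C$ applied along $h$, so the second set has at least $n$ elements whenever the first does. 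For $\leq\!n\,r.(\lnot C)$, I would argue contrapositively: if $x'$ has at least $n+1$ $r$-successors satisfying $\lnot C$, pull them back along $h^{-1}$ (also a subset-of-$Z$ bijection — this needs $Z^{-1}$-style reasoning, or rather the fact that $h$ is a bijection so $h^{-1}$ witnesses the correspondence in the other direction) and use the IH for $C$ to conclude each preimage satisfies $\lnot C$ in $\mI$, contradicting $(\leq\!n\,r.(\lnot C))^\mI(x)$. This last point — transferring $\lnot C$ backwards from $\mI'$ to $\mI$ using only the one-directional IH for $C$ — is the step I expect to be the main subtlety: it works because $\lnot C^{\mI'}(y')$ plus $C^\mI(y)$ (for a $Z$-related pair $\tuple{y,y'}$, obtained as $h^{-1}(y')$) would contradict the IH \eqref{bs:eqB-2}, so the careful bookkeeping of which direction of the $Q$-bijection is invoked, and the observation that $\leq$ is handled via $\lnot$, is exactly the reason the definition of semi-positive concepts places $\lnot$ in front of the counted concept. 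Everything else is routine case-checking once the mutual-induction skeleton is in place.
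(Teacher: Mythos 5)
Your proof is correct and follows essentially the same route as the paper's: a mutual induction on the structures of $C$, $R$ and $S$, with the same case analysis (including the inner induction on path length for $R^*$, the contrapositive transfer for $(\lnot C)?$ and for $\leq\!n\,r.(\lnot C)$ via the inverse of the bijection from condition~\eqref{bs:eqQ}, and the asymmetric roles of $\mLPspE$ and $\mLPspV$). No gaps; nothing further is needed.
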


See the appendix for a proof of this lemma.


A~concept $C$ of $\mLP$ is said to be {\em preserved by $\mLP$-comparisons} if, for any interpretations $\mI$, $\mI'$ and any $\mLP$-comparison $Z$ between $\mI$ and $\mI'$, if $Z(x,x')$ holds and $x \in C^\mI$ then $x' \in C^{\mI'}$.
The following theorem follows immediately from the assertion~\eqref{bs:eqB-2} of Lemma~\ref{lemma: bs-inv-1}.

\begin{theorem} \label{theorem: bs-inv-1}
All concepts of $\mLPsp$ are preserved by $\mLP$-comparisons.
\end{theorem}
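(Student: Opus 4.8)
The plan is to derive Theorem~\ref{theorem: bs-inv-1} directly from Lemma~\ref{lemma: bs-inv-1}, which is the real workhorse here. The theorem merely unpacks a definition. So first I would recall the definition of ``preserved by $\mLP$-comparisons'': a concept $C$ of $\mLP$ is preserved if for all interpretations $\mI$, $\mI'$, every $\mLP$-comparison $Z$ between them, and all $x \in \Delta^\mI$, $x' \in \Delta^{\mI'}$, we have that $Z(x,x')$ and $x \in C^\mI$ imply $x' \in C^{\mI'}$. This is, verbatim, the content of assertion~\eqref{bs:eqB-2} of Lemma~\ref{lemma: bs-inv-1}, namely $Z(x,x') \Rightarrow [C^\mI(x) \Rightarrow C^{\mI'}(x')]$, holding for every concept $C$ of $\mLPsp$.

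Hence the proof is essentially one line: fix an arbitrary semi-positive concept $C$ (i.e.\ $C \in \mLPsp$), arbitrary interpretations $\mI, \mI'$, an arbitrary $\mLP$-comparison $Z$ between them, and $x,x'$ with $Z(x,x')$ and $x \in C^\mI$; then \eqref{bs:eqB-2} immediately yields $x' \in C^{\mI'}$. Since $C$ was an arbitrary element of $\mLPsp$, all concepts of $\mLPsp$ are preserved by $\mLP$-comparisons.

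Since there is no genuine obstacle at this level, I expect all the difficulty to have been pushed into Lemma~\ref{lemma: bs-inv-1} (whose proof is deferred to the appendix). If I were proving that lemma instead, the main obstacle would be the simultaneous induction on the structure of concepts and roles: the test operator forces one to track the polarity of tests (positive tests $C?$ inside $\mLPspE$ versus negative tests $(\lnot C)?$ inside $\mLPspV$), and the number-restriction cases $\leq n\,r.(\lnot C)$ require care because they combine a ``$\leq$'' counting bound with a negated argument, so one must exploit the bijection conditions \eqref{bs:eqQ}/\eqref{bs:eqQI} together with the induction hypothesis applied to $C$ in the ``other direction''. But for the theorem itself, nothing of this sort is needed — it is a direct corollary.

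\begin{proof}
Let $C$ be any concept of $\mLPsp$, i.e.\ any semi-positive concept of $\mLP$. Let $\mI$ and $\mI'$ be arbitrary interpretations, let $Z$ be an arbitrary $\mLP$-comparison between $\mI$ and $\mI'$, and let $x \in \Delta^\mI$, $x' \in \Delta^{\mI'}$ be such that $Z(x,x')$ holds and $x \in C^\mI$. By assertion~\eqref{bs:eqB-2} of Lemma~\ref{lemma: bs-inv-1}, from $Z(x,x')$ and $C^\mI(x)$ we obtain $C^{\mI'}(x')$, that is, $x' \in C^{\mI'}$. Hence $C$ is preserved by $\mLP$-comparisons. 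As $C$ was an arbitrary concept of $\mLPsp$, all concepts of $\mLPsp$ are preserved by $\mLP$-comparisons.
\end{proof}
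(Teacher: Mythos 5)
Your proof is correct and matches the paper exactly: the paper states that Theorem~\ref{theorem: bs-inv-1} ``follows immediately from the assertion~\eqref{bs:eqB-2} of Lemma~\ref{lemma: bs-inv-1}'', which is precisely your one-line unpacking of the definition of preservation. Your remarks about where the real difficulty lies (the induction in the lemma, the polarity of tests, and the $\leq\!n\,r.(\lnot C)$ cases) also accurately reflect the structure of the appendix proof.
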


\begin{corollary} \label{cor: bs-inv-2}
All concepts of $\mLPp$ are preserved by $\mLP$-comparisons.
\end{corollary}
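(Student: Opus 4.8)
The plan is to derive Corollary~\ref{cor: bs-inv-2} directly from Theorem~\ref{theorem: bs-inv-1} by observing that every positive concept is also a semi-positive concept. First I would recall the definitions: $\mLPsp$ is defined exactly as $\mLPp$ except that $\bot$ is additionally admitted as a base concept. Hence the recursive clauses generating $\mLPp$ form a subset of those generating $\mLPsp$, and since both are defined as \emph{smallest} sets closed under their respective clauses, a straightforward induction on the structure of concepts (simultaneously with the corresponding claim for the role classes $\mLPpE \subseteq \mLPspE$ and $\mLPpV \subseteq \mLPspV$) shows $\mLPp \subseteq \mLPsp$.

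Once the inclusion $\mLPp \subseteq \mLPsp$ is established, the corollary is immediate: let $C$ be any concept of $\mLPp$; then $C$ is also a concept of $\mLPsp$, so by Theorem~\ref{theorem: bs-inv-1} it is preserved by $\mLP$-comparisons, which is exactly the assertion of the corollary. Equivalently, one can just quote assertion~\eqref{bs:eqB-2} of Lemma~\ref{lemma: bs-inv-1}, which holds for every concept of $\mLPsp$ and hence for every concept of $\mLPp$.

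There is essentially no obstacle here --- the ``hard part'' is purely bookkeeping: one must check that the simultaneous induction over concepts and the two role classes is set up correctly, in particular that each formation rule for $\mLPp$, $\mLPpE$, $\mLPpV$ has a matching rule for $\mLPsp$, $\mLPspE$, $\mLPspV$ with the same side conditions on $\Phi$. Since $\mLPsp$ is obtained from $\mLPp$ by \emph{adding} a generator and nothing else, this is transparent. Thus the proof is a single sentence invoking the set inclusion and Theorem~\ref{theorem: bs-inv-1}.
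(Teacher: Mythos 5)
Your proposal is correct and matches the paper's (implicit) argument exactly: since $\mLPsp$ is defined from $\mLPp$ only by additionally admitting $\bot$, every concept of $\mLPp$ is a concept of $\mLPsp$, and the corollary follows at once from Theorem~\ref{theorem: bs-inv-1}. Nothing further is needed.
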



Let $\mI$ and $\mI'$ be interpretations, $x \in \Delta^\mI$ and $x' \in \Delta^{\mI'}$. Define that: 
\begin{itemize}
\item $x$ is {\em equivalent to} $x'$ w.r.t.\ (concepts of) $\mLP$, denoted by $x \equivP x'$, if, for every concept $C$ of $\mLP$, $x \in C^\mI$ iff $x' \in C^{\mI'}$;
\item $x$ is {\em less than or equal to} $x'$ w.r.t.\ concepts of $\mLPsp$ (resp.~$\mLPp$), denoted by $x \leqPsp x'$ (resp.~$x \leqPp x'$), if, for every concept $C$ of $\mLPsp$ (resp.~$\mLPp$), $x \in C^\mI$ implies $x' \in C^{\mI'}$; 
\item $x$ is {\em equivalent to} $x'$ w.r.t.\ concepts of $\mLPsp$, denoted by $x \equivPsp x'$, if $x \leqPsp x'$ and $x' \leqPsp x$. 
\end{itemize} 

We say that an interpretation $\mI$ is {\em finitely branching} (or {\em image-finite}) w.r.t.\ $\mLP$ if, for every $x \in \Delta^\mI$ and every basic role $R$ of $\mLP$, the set $\{y \in \Delta^\mI \mid R^\mI(x,y)\}$ is finite. 
We say that $\mI$ is {\em unreachable-objects-free} (w.r.t.\ $\mLP$) if every element of $\Delta^\mI$ is reachable from some $a^\mI$ (with $a \in \IN$) via a path consisting of edges being instances of basic roles (of~$\mLP$).
The following theorem comes from our work~\cite{BSDL-arxiv}.

\begin{theorem}[The Hennessy-Milner Property] \label{theorem: H-M-0}
Let $\mI$ and $\mI'$ be finitely branching interpretations (w.r.t.~$\mLP$) such that, for every $a \in \IN$, $a^\mI \equivP a^{\mI'}$. 
Suppose that if $U \in \Phi$ then either $\IN \neq \emptyset$ and both $\mI$, $\mI'$ are finite, or both $\mI$, $\mI'$ are unreachable-objects-free. 
Then, for every $x \in \Delta^\mI$ and $x' \in \Delta^{\mI'}$, $x \equivP x'$ iff there exists an $\mLP$-bisimulation $Z$ between $\mI$ and $\mI'$ such that $Z(x,x')$ holds. In particular, the relation $\{\tuple{x,x'} \in \Delta^ \mI\times \Delta^{\mI'} \mid x \equivP x'\}$ is an $\mLP$-bisimulation between $\mI$ and~$\mI'$.
\end{theorem}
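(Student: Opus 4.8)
The plan is to prove the two implications of the biconditional separately; the ``in particular'' clause falls out of the second one. \emph{From a bisimulation to concept-equivalence.} Suppose $Z$ is an $\mLP$-bisimulation between $\mI$ and $\mI'$ with $Z(x,x')$. Since in a bisimulation the conditions \eqrefp{bs:eqB}, \eqrefp{bs:eqO0} and \eqrefp{bs:eqSelf} are biconditionals, the conditions \eqref{bs:eqC}/\eqref{bs:eqD}, \eqref{bs:eqI1}/\eqref{bs:eqI2} and \eqref{bs:eqU1}/\eqref{bs:eqU2} come in symmetric pairs, and \eqref{bs:eqQ}/\eqref{bs:eqQI} are phrased via bijections, the converse relation $Z^{-1}$ is again an $\mLP$-bisimulation (between $\mI'$ and $\mI$); in particular both $Z$ and $Z^{-1}$ are $\mLP$-comparisons. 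A routine induction on an arbitrary concept $C$ of $\mLP$ --- the analogue for bisimulations of Lemma~\ref{lemma: bs-inv-1}, in which the $\lnot$-case is settled by invoking the induction hypothesis in both directions (this two-sided invariance is the one established in~\cite{BSDL-arxiv}) --- then gives $Z(u,u') \Rightarrow [C^\mI(u) \Leftrightarrow C^{\mI'}(u')]$ for all $u \in \Delta^\mI$, $u' \in \Delta^{\mI'}$. Taking $u = x$, $u' = x'$ yields $x \equivP x'$.

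\emph{From concept-equivalence to a bisimulation.} I would prove directly that the relation $Z \defeq \{\tuple{u,u'} \in \Delta^\mI \times \Delta^{\mI'} \mid u \equivP u'\}$ is an $\mLP$-bisimulation between $\mI$ and $\mI'$; then $x \equivP x'$ immediately gives $Z(x,x')$, and this is also the ``in particular'' statement. Conditions \eqref{bs:eqA} and \eqrefp{bs:eqB} are immediate (the former is the hypothesis $a^\mI \equivP a^{\mI'}$, the latter holds because every $A \in \CN$ is itself a concept of $\mLP$), and \eqrefp{bs:eqO0}, \eqrefp{bs:eqSelf} follow by evaluating the concepts $\{a\}$ and $\E r.\Self$. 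For the ``forth'' condition \eqref{bs:eqC}: assume $Z(u,u')$ and $r^\mI(u,v)$ but no $r$-successor of $u'$ is $\equivP$-equivalent to $v$; the set $\{v' \mid r^{\mI'}(u',v')\}$ is finite by finite branching, so for each such $v'$ we may pick (negating if needed) a concept $C_{v'}$ of $\mLP$ with $v \in C_{v'}^\mI$ and $v' \notin C_{v'}^{\mI'}$, and then the conjunction $C$ of all these $C_{v'}$ (or $\top$ if that set is empty) has $u \in (\E r.C)^\mI$ but $u' \notin (\E r.C)^{\mI'}$, contradicting $u \equivP u'$. Condition \eqref{bs:eqD} is symmetric (via finite branching of $\mI$), and \eqref{bs:eqI1}, \eqref{bs:eqI2} go through identically with $r^-$ replacing $r$, using that $\E r^-.C$ is a concept of $\mLP$ when $I \in \Phi$ and that the relevant successor sets are finite since $\mI$, $\mI'$ are finitely branching w.r.t.\ basic roles.

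\emph{The number-restriction conditions.} Conditions \eqref{bs:eqQ} (when $Q \in \Phi$) and \eqref{bs:eqQI} (when $\{Q,I\} \subseteq \Phi$) require the most care. Fix $Z(u,u')$ and a role name $r$, and set $Y = \{v \mid r^\mI(u,v)\}$, $Y' = \{v' \mid r^{\mI'}(u',v')\}$, both finite. Partition $Y \cup Y'$ into the finitely many classes $\kappa_1,\dots,\kappa_p$ grouping together the elements that satisfy exactly the same concepts of $\mLP$ (each evaluated in its own interpretation). For $i \neq j$, choosing representatives $w_i \in \kappa_i$, $w_j \in \kappa_j$ and a concept they disagree on (negated if necessary) yields a concept $D_{ij}$ of $\mLP$ that holds throughout $\kappa_i \cap (Y \cup Y')$ and nowhere on $\kappa_j \cap (Y \cup Y')$; the conjunction $C_i$ of the $D_{ij}$ over $j \neq i$ (taken to be $\top$ if $p = 1$) then holds, within $Y \cup Y'$, exactly at the elements of $\kappa_i$. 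Evaluating the number restrictions $\geq n\,r.C_i$ at $u$ and at $u'$ and transferring across $u \equivP u'$ in both directions forces $\#(Y \cap \kappa_i) = \#(Y' \cap \kappa_i)$ for each $i$; picking for every $i$ an arbitrary bijection between these equal finite sets --- whose graph is contained in $Z$, since each set lies inside a single class --- and taking the union over $i$ produces the required bijection $h \subseteq Z$. Condition \eqref{bs:eqQI} is verified in the same way with $r^-$ in place of $r$.

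\emph{The universal-role conditions, and the main obstacle.} When $U \in \Phi$, conditions \eqref{bs:eqU1}, \eqref{bs:eqU2} need the side hypothesis; all the preceding conditions were checked without it, so there is no circularity. If $\IN \neq \emptyset$ and $\mI$, $\mI'$ are finite: given $u \in \Delta^\mI$ without a $Z$-partner, pick for each of the finitely many $u' \in \Delta^{\mI'}$ a concept $D_{u'}$ with $u \in D_{u'}^\mI$ and $u' \notin D_{u'}^{\mI'}$; their conjunction $D$ satisfies $D^{\mI'} = \emptyset$ while $u \in D^\mI$, so $a^\mI \in (\E U.D)^\mI$ but $a^{\mI'} \notin (\E U.D)^{\mI'}$ for any $a \in \IN$, contradicting $a^\mI \equivP a^{\mI'}$; hence \eqref{bs:eqU1} holds, and \eqref{bs:eqU2} is symmetric. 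If instead $\mI$, $\mI'$ are unreachable-objects-free: $Z(a^\mI,a^{\mI'})$ holds for all $a \in \IN$ by \eqref{bs:eqA}, and since every element of $\Delta^\mI$ is reachable from some $a^\mI$ along a path of basic-role edges, iterating the already-verified \eqref{bs:eqC}/\eqref{bs:eqI1} along such a path propagates $Z$-partnership to it, giving \eqref{bs:eqU1}; \eqref{bs:eqU2} is symmetric, propagating from the $\mI'$ side via \eqref{bs:eqD}/\eqref{bs:eqI2}. Apart from the bijection/counting argument for \eqref{bs:eqQ}--\eqref{bs:eqQI}, which I expect to be the one point requiring genuine care, this is a fairly standard Hennessy--Milner argument; the only essential use of finite branching is to keep the relevant successor sets --- and hence the separating conjunctions --- finite.
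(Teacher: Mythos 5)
Your proof is correct. Note that the paper itself does not prove Theorem~\ref{theorem: H-M-0} --- it imports it from~\cite{BSDL-arxiv} --- so the natural comparison is with the appendix proofs of the closely analogous Theorems~\ref{theorem: H-M} and~\ref{theorem: H-M-3}, which follow the same overall template as yours: two-sided invariance by induction on concepts for one direction, and separating concepts combined with finite branching to verify each bisimulation condition for the other. The one place where you genuinely diverge is the treatment of conditions \eqref{bs:eqQ} and \eqref{bs:eqQI}. You partition $Y \cup Y'$ into $\equivP$-classes, build a concept $C_i$ characterizing each class within $Y \cup Y'$, and use $\geq\!n\,r.C_i$ in both directions to force $\#(Y \cap \kappa_i) = \#(Y' \cap \kappa_i)$, from which the bijection $h \subseteq Z$ is immediate. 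The paper instead constructs the bijection by an iterative assignment with a nondeterministic exchange step, deriving a contradiction from a Hall-type obstruction via a concept of the form $\geq\!h\,r.(C_1 \mor \ldots \mor C_h)$ with each $C_i$ a conjunction of separating concepts; it does this because in Theorem~\ref{theorem: H-M} the underlying relation is the preorder $\leqPsp$, for which your partition argument would not apply (the classes are not well defined), and Theorem~\ref{theorem: H-M-3} then lifts the two one-directional bijections to a single $\equivPsp$-respecting one. For the present theorem, where $\equivP$ is a genuine equivalence, your counting argument is cleaner and entirely adequate; the paper's machinery buys generality across the comparison-based variants. All other steps (the $\top$ fallback when the successor set is empty, the non-circular derivation of \eqref{bs:eqU1}--\eqref{bs:eqU2} from \eqref{bs:eqA}, \eqref{bs:eqC}--\eqref{bs:eqI2} in the unreachable-objects-free case, and the $\E U.D$ argument in the finite case) match the paper's corresponding arguments.
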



In the rest of this section we present theorems similar to the Hennessy-Milner property that are related to $\mLP$-comparisons and/or semi-positive concepts.

\begin{theorem} \label{theorem: H-M}
Let $\mI$ and $\mI'$ be finitely branching interpretations (w.r.t.~$\mLP$) such that, for every $a \in \IN$, $a^\mI \leqPsp a^{\mI'}$. 
Suppose that if $U \in \Phi$ then either $\IN \neq \emptyset$ and both $\mI$, $\mI'$ are finite, or both $\mI$, $\mI'$ are unreachable-objects-free. 
Then, for every $x \in \Delta^\mI$ and $x' \in \Delta^{\mI'}$, $x \leqPsp x'$ iff there exists an $\mLP$-comparison $Z$ between $\mI$ and $\mI'$ such that $Z(x,x')$ holds. In particular, the relation $\{\tuple{x,x'} \in \Delta^ \mI\times \Delta^{\mI'} \mid x \leqPsp x'\}$ is an $\mLP$-comparison between $\mI$ and~$\mI'$.
\end{theorem}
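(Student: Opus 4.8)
The plan is to prove the non-trivial direction: if $x \leqPsp x'$ then there is an $\mLP$-comparison $Z$ with $Z(x,x')$. (The converse is immediate from Theorem~\ref{theorem: bs-inv-1}, since $Z$ being a comparison and $Z(x,x')$ force every semi-positive concept true at $x$ to be true at $x'$.) The candidate relation is the obvious one, $Z \defeq \{\tuple{u,u'} \in \Delta^\mI \times \Delta^{\mI'} \mid u \leqPsp u'\}$; once we show this is an $\mLP$-comparison, both the ``in particular'' clause and the main equivalence follow. So the whole argument reduces to verifying conditions \eqref{bs:eqA}--\eqref{bs:eqSelf} for this $Z$, using finite branching (and the side condition on $U$) crucially.

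The routine conditions come first. Condition \eqref{bs:eqA} is exactly the hypothesis $a^\mI \leqPsp a^{\mI'}$ for all $a \in \IN$. Condition \eqref{bs:eqB} holds because $A$ is itself a semi-positive concept. Condition \eqref{bs:eqO0} holds because, if $O \in \Phi$, the nominal $\{a\}$ is semi-positive, so $u \in \{a\}^\mI = \{a^\mI\}$ forces $u' \in \{a\}^{\mI'} = \{a^{\mI'}\}$. Similarly \eqref{bs:eqSelf} holds because $\E r.\Self$ is semi-positive. For \eqref{bs:eqU1}: given $u \in \Delta^\mI$, under the side condition either $\Delta^\mI$, $\Delta^{\mI'}$ are finite with $\IN \neq \emptyset$ (so pick an individual and push the argument through $\leqPsp$ via $\E U$-concepts, which are semi-positive), or both interpretations are unreachable-objects-free so $u$ is reached from some $a^\mI$ by a basic-role path, and we chase that path on the $\mI'$-side using the already-established forth conditions \eqref{bs:eqC}/\eqref{bs:eqI1} starting from $a^\mI \leqPsp a^{\mI'}$. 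Condition \eqref{bs:eqU2} is dual and uses the back conditions. These $U$-cases, and the interplay with whichever of \eqref{bs:eqC}/\eqref{bs:eqD} one has already proved, are the fiddly bookkeeping part.

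The heart of the proof is the forth/back conditions \eqref{bs:eqC}, \eqref{bs:eqD} (and, when $I \in \Phi$, \eqref{bs:eqI1}, \eqref{bs:eqI2}), and the number-restriction conditions \eqref{bs:eqQ}, \eqref{bs:eqQI}. For the forth condition \eqref{bs:eqC}: suppose $u \leqPsp u'$ and $r^\mI(u,y)$. Let $Y' = \{y' \in \Delta^{\mI'} \mid r^{\mI'}(u',y')\}$, which is \emph{finite} by image-finiteness of $\mI'$. If no $y' \in Y'$ satisfies $y \leqPsp y'$, then for each $y' \in Y'$ there is a semi-positive concept $C_{y'}$ with $y \in C_{y'}^\mI$ but $y' \notin C_{y'}^{\mI'}$; set $C = \bigsqcap_{y' \in Y'} C_{y'}$ (a finite conjunction of semi-positive concepts, hence semi-positive, with the convention that the empty conjunction is $\top$). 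Then $y \in C^\mI$, so $u \in (\E r.C)^\mI$; since $\E r.C$ is semi-positive and $u \leqPsp u'$, we get $u' \in (\E r.C)^{\mI'}$, i.e.\ some $y' \in Y'$ satisfies $y' \in C^{\mI'}$, contradicting $y' \notin C_{y'}^{\mI'}$. Hence some $y' \in Y'$ with $y \leqPsp y'$, i.e.\ $Z(y,y')$, exists. For the back condition \eqref{bs:eqD}: suppose $u \leqPsp u'$ and $r^{\mI'}(u',y')$; let $Y = \{y \in \Delta^\mI \mid r^\mI(u,y)\}$, finite by image-finiteness of $\mI$. If no $y \in Y$ satisfies $y \leqPsp y'$, then for each $y \in Y$ pick a semi-positive $C_y$ with $y \in C_y^\mI$, $y' \notin C_y^{\mI'}$, and set $C = \bigsqcup_{y \in Y} C_y$ (empty disjunction being $\bot$, which is where semi-positivity rather than positivity is needed). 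Then $Y \subseteq C^\mI$, so $u \in (\V r.C)^\mI$; but $\V r.C$ is semi-positive (with $r$ a role of $\mLPspV$), so $u' \in (\V r.C)^{\mI'}$, giving $y' \in C^{\mI'}$ — contradiction. The $I$-conditions \eqref{bs:eqI1}, \eqref{bs:eqI2} are handled identically with $r^-$ in place of $r$, which is legitimate since $r^-$ is a role of both $\mLPspE$ and $\mLPspV$ when $I \in \Phi$. For \eqref{bs:eqQ}: again $\{y \mid r^\mI(u,y)\}$ and $\{y' \mid r^{\mI'}(u',y')\}$ are finite; using $\geq n\,r.C$ and $\leq n\,r.(\lnot C)$ (both semi-positive) as $C$ ranges over finite conjunctions of $\leqPsp$-distinguishing concepts, one shows the two neighbourhoods have, for every semi-positive $C$, the same count of $C$-successors, and then a Hall-type marriage argument on the finite bipartite graph $\{(y,y') \mid y \leqPsp y'\}$ produces the required bijection $h \subseteq Z$; \eqref{bs:eqQI} is the same with $r^-$.

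The main obstacle I anticipate is the number-restriction case \eqref{bs:eqQ}: one must argue that ``same count of $C$-successors for every semi-positive $C$'' upgrades to the existence of a single bijection respecting $\leqPsp$, which is a finite combinatorial (defect-version-of-Hall) argument that needs the two bounds $\geq n\,r.C$ and $\leq n\,r.(\lnot C)$ played against each other for cleverly chosen $C$; the rest is essentially the forth/back argument above plus the $U$-bookkeeping. Everything relies on finite branching to keep the conjunctions/disjunctions finite and on $\bot$ being semi-positive to make the back direction (and the empty-neighbourhood case) go through — which is precisely the point of working with $\mLPsp$ rather than $\mLPp$.
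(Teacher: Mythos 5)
Your proposal is correct and follows essentially the same route as the paper's proof: the same candidate relation $\{\tuple{u,u'} \mid u \leqPsp u'\}$, the same distinguishing-concept arguments with finite conjunctions for \eqref{bs:eqC} and finite disjunctions (empty disjunction $=\bot$) for \eqref{bs:eqD}, the same counting-plus-matching argument for \eqref{bs:eqQ} (the paper's exchange loop with its blocking set $\mathbf{S}_0$ is just Hall's theorem unfolded), and the same two-case treatment of \eqref{bs:eqU1}--\eqref{bs:eqU2}. One small inaccuracy: in the \eqref{bs:eqQ} step you claim the two $r$-neighbourhoods have \emph{the same} count of $C$-successors for every semi-positive $C$; since $\lnot C$ need not be semi-positive, only the inequality $\#\{y \mid r^\mI(u,y) \land C^\mI(y)\} \leq \#\{y' \mid r^{\mI'}(u',y') \land C^{\mI'}(y')\}$ is available, but this one-sided bound together with the equality of total out-degrees (via $\geq\!m\,r.\top$ and $\leq\!m\,r.\lnot\bot$) is exactly what the Hall condition requires, so your argument goes through unchanged.
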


See the appendix for a proof of this theorem.


Analyzing the proof of Theorem~\ref{theorem: H-M}, it can be seen that, in the case $Q \notin \Phi$, $\bot$ is only used for showing that there exists $y \in \Delta^\mI$ such that $r^\mI(x,y)$ holds when proving the condition~\eqref{bs:eqD}. If $\mI$ is a serial interpretation then that property is guaranteed. Therefore, we also have the following theorem, whose proof is very similar to the proof of Theorem~\ref{theorem: H-M}.  

\begin{theorem} \label{theorem: H-M-2}
Let $\mI$ and $\mI'$ be finitely branching interpretations (w.r.t.~$\mLP$) such that $\mI$ is serial and, for every $a \in \IN$, $a^\mI \leqPp a^{\mI'}$. 
Suppose $Q \notin \Phi$ and if $U \in \Phi$ then either $\IN \neq \emptyset$ and both $\mI$, $\mI'$ are finite, or both $\mI$, $\mI'$ are unreachable-objects-free. 
Then, for every $x \in \Delta^\mI$ and $x' \in \Delta^{\mI'}$, $x \leqPp x'$ iff there exists an $\mLP$-comparison $Z$ between $\mI$ and $\mI'$ such that $Z(x,x')$ holds. In particular, the relation $\{\tuple{x,x'} \in \Delta^ \mI\times \Delta^{\mI'} \mid x \leqPp x'\}$ is an $\mLP$-comparison between $\mI$ and~$\mI'$.
\end{theorem}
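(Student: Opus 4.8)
The plan is to prove the two implications separately, with essentially all of the work in the ``only if'' direction. For ``if'', observe that if $Z$ is an $\mLP$-comparison between $\mI$ and $\mI'$ with $Z(x,x')$ then Corollary~\ref{cor: bs-inv-2} immediately gives $x \leqPp x'$. For ``only if'' it is enough to prove the ``in particular'' assertion, i.e.\ that $Z_0 = \{\tuple{x,x'} \in \Delta^\mI \times \Delta^{\mI'} \mid x \leqPp x'\}$ is an $\mLP$-comparison between $\mI$ and $\mI'$, since then any $x,x'$ with $x \leqPp x'$ are related by the comparison $Z_0$. I would obtain this by replaying the proof of Theorem~\ref{theorem: H-M} almost verbatim, with $\leqPsp$ systematically replaced by $\leqPp$ and ``semi-positive concept'' by ``positive concept'', and then checking each condition among \eqref{bs:eqA}--\eqref{bs:eqSelf} relevant to the given $\Phi$; note that $Q \notin \Phi$, so \eqref{bs:eqQ} and \eqref{bs:eqQI} never arise.

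The ``pointwise'' conditions are straightforward. Condition~\eqref{bs:eqA} is precisely the hypothesis $a^\mI \leqPp a^{\mI'}$, and \eqref{bs:eqB}, \eqref{bs:eqO0} and \eqref{bs:eqSelf} hold because $A$, $\{a\}$ and $\E r.\Self$ are positive concepts, so satisfaction of them is transported along $\leqPp$. For the ``forth'' conditions \eqref{bs:eqC} and, when $I \in \Phi$, \eqref{bs:eqI1}, I would argue by contradiction in the usual Hennessy--Milner style: given $Z_0(x,x')$ and $r^\mI(x,y)$, the set $\{y' \mid r^{\mI'}(x',y')\}$ is finite by finite branching of $\mI'$; were there no $y'$ in it with $y \leqPp y'$, picking for each such $y'$ a positive concept $C_{y'}$ with $C_{y'}^\mI(y)$ holding but $C_{y'}^{\mI'}(y')$ failing, and letting $C$ be the conjunction of all these $C_{y'}$ (read as $\top$ if the set is empty), we would get that $C$ is positive, $x \in (\E r.C)^\mI$, and $\E r.C$ is positive, so $x \leqPp x'$ would force some $y'$ with $r^{\mI'}(x',y')$ and $C^{\mI'}(y')$, contradicting the choice of the $C_{y'}$. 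The argument for \eqref{bs:eqI1} is identical with $r^-$ in place of $r$, which is legitimate since $r^-$ is then a role of $\mLPpE$.

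The only essential deviation from the proof of Theorem~\ref{theorem: H-M} — and the step I expect to require the only real care — concerns the ``back'' conditions \eqref{bs:eqD} and, when $I \in \Phi$, \eqref{bs:eqI2}. There one dualizes the previous argument using the positive concept $\V r.D$ (resp.\ $\V r^-.D$) with $D$ a disjunction of suitable positive concepts; in the proof of Theorem~\ref{theorem: H-M}, $\bot$ is invoked exactly to cover the subcase where $x$ has no $r^\mI$-successor, in which $x \in (\V r.\bot)^\mI$ and $\V r.\bot$ is semi-positive but not positive (a disjunction of zero concepts being $\bot$). Here seriality of $\mI$ eliminates that subcase: $r$, and when $I \in \Phi$ also $r^-$, is a basic role of $\mLP$, so $\{y \mid r^\mI(x,y)\}$ (resp.\ $\{y \mid r^\mI(y,x)\}$) is non-empty, and the dual argument then goes through with a genuine, non-empty disjunction $D$. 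Finally, when $U \in \Phi$ the conditions \eqref{bs:eqU1}--\eqref{bs:eqU2} carry over unchanged from the proof of Theorem~\ref{theorem: H-M}: if $\IN \neq \emptyset$ and $\mI$, $\mI'$ are finite one derives a contradiction from a positive concept $\E U.C$ evaluated at $a^\mI$ for some $a \in \IN$, using $a^\mI \leqPp a^{\mI'}$; and if $\mI$, $\mI'$ are unreachable-objects-free then \eqref{bs:eqU1}--\eqref{bs:eqU2} already follow from \eqref{bs:eqA}, \eqref{bs:eqC}, \eqref{bs:eqD} and their inverse variants by induction along a reachability path from an $a^\mI$ resp.\ $a^{\mI'}$. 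Beyond this, the remaining obligations are routine: checking that all concepts built in these arguments indeed lie in $\mLPp$, and keeping the $U$-bookkeeping straight. The conceptual content added to Theorem~\ref{theorem: H-M} is thus just the observation that seriality of $\mI$ supplies the successor whose existence was previously extracted from $\bot$.
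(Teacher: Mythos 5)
Your proposal is correct and matches the paper's own argument: the paper proves Theorem~\ref{theorem: H-M-2} precisely by observing that, when $Q \notin \Phi$, the only use of $\bot$ in the proof of Theorem~\ref{theorem: H-M} is to establish non-emptiness of $\{y \mid r^\mI(x,y)\}$ in condition~\eqref{bs:eqD} (and its inverse variant), which seriality of $\mI$ supplies directly. Your spelled-out replay of the remaining conditions with $\leqPsp$ replaced by $\leqPp$ is exactly what the paper leaves implicit.
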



\section{Characterizing Bisimulation by Semi-Positive Concepts}

In the case $Q \in \Phi$, there is a closer relationship between semi-positive concepts and $\mLP$-bisimulation from the semantic point of view. 

\begin{theorem} \label{theorem: H-M-3}
Let $\mI$ and $\mI'$ be finitely branching interpretations (w.r.t.~$\mLP$) such that, for every $a \in \IN$, $a^\mI \equivPsp a^{\mI'}$. 
Suppose $Q \in \Phi$ and if $U \in \Phi$ then both $\mI$ and $\mI'$ are unreachable-objects-free. 
Then, for every $x \in \Delta^\mI$ and $x' \in \Delta^{\mI'}$, $x \equivPsp x'$ iff there exists an $\mLP$-bisimulation $Z$ between $\mI$ and $\mI'$ such that $Z(x,x')$ holds. In particular, the relation $\{\tuple{x,x'} \in \Delta^ \mI\times \Delta^{\mI'} \mid x \equivPsp x'\}$ is an $\mLP$-bisimulation between $\mI$ and~$\mI'$.
\end{theorem}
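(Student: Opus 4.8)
The plan is to derive Theorem~\ref{theorem: H-M-3} from Theorem~\ref{theorem: H-M} (and, by symmetry, its ``mirror image'') together with the extra information that $Q \in \Phi$. Set $Z = \{\tuple{x,x'} \mid x \equivPsp x'\}$. Since $x \equivPsp x'$ means $x \leqPsp x'$ \emph{and} $x' \leqPsp x$, Theorem~\ref{theorem: H-M} applied to the pair $(\mI,\mI')$ tells us that $Z_1 = \{\tuple{x,x'} \mid x \leqPsp x'\}$ is an $\mLP$-comparison between $\mI$ and $\mI'$, and applied to $(\mI',\mI)$ that $Z_2 = \{\tuple{x',x} \mid x' \leqPsp x\}$ is an $\mLP$-comparison between $\mI'$ and $\mI$; hence $Z = Z_1 \cap Z_2^{-1}$ is the common ``core''. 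The claim ``$x \equivPsp x'$ iff some $\mLP$-bisimulation relates them'' then reduces to showing that this particular $Z$ satisfies the \emph{bisimulation} conditions \eqrefp{bs:eqB}, \eqrefp{bs:eqO0}, \eqrefp{bs:eqSelf} (the strengthened versions with $\Leftrightarrow$), not merely the comparison conditions with $\Rightarrow$; the transition conditions \eqref{bs:eqC}--\eqref{bs:eqD}, \eqref{bs:eqI1}--\eqref{bs:eqI2}, \eqref{bs:eqU1}--\eqref{bs:eqU2} and the number-restriction conditions \eqref{bs:eqQ}, \eqref{bs:eqQI} are already inherited from $Z_1$ (for the $\mI$-to-$\mI'$ direction) and from $Z_2$ (for the reverse direction), and for the converse/number conditions one must check they combine correctly --- this is routine once \eqrefp{bs:eqB} etc.\ are in place.

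First I would handle the atomic-label condition \eqrefp{bs:eqB}: if $Z(x,x')$ then $x \leqPsp x'$ gives $A^\mI(x) \Rightarrow A^{\mI'}(x')$ since $A$ is a semi-positive concept, and $x' \leqPsp x$ gives the converse, so the equivalence holds. The same symmetric argument gives \eqrefp{bs:eqO0} (using $\{a\} \in \mLPsp$ when $O \in \Phi$: $x = a^\mI$ iff $\{a\}^\mI(x)$, so $x = a^\mI \Leftrightarrow x' = a^{\mI'}$) and \eqrefp{bs:eqSelf} (using $\E r.\Self \in \mLPsp$ when $\Self \in \Phi$: $r^\mI(x,x) \Leftrightarrow r^{\mI'}(x',x')$). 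So far this uses nothing beyond the two applications of Theorem~\ref{theorem: H-M} and does not even need $Q \in \Phi$.

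The role of the hypothesis $Q \in \Phi$ is in the transition conditions. For a plain $\mLP$-comparison the forward condition \eqref{bs:eqC} comes from $Z_1$ and the backward condition \eqref{bs:eqD} comes from $Z_2$, but for a \emph{bisimulation} we need the witness $y'$ in \eqref{bs:eqC} to satisfy $Z(y,y')$, i.e.\ $y \equivPsp y'$, not merely $y \leqPsp y'$. Here is where $Q \in \Phi$ (hence condition \eqref{bs:eqQ}) does the work: the bijection $h$ furnished by \eqref{bs:eqQ} for $Z_1$ matches the $r$-successors of $x$ with those of $x'$ with $h \subseteq Z_1$, and because it is a \emph{bijection} on finite sets (finite branching!) it is essentially forced to respect $\equivPsp$-classes, so one can argue $h \subseteq Z$; symmetrically for \eqref{bs:eqQI} when $I \in \Phi$. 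The one subtlety to watch is that $Q \in \Phi$ does not force $I \in \Phi$, so for the pure forward/backward atomic-role conditions \eqref{bs:eqC}--\eqref{bs:eqD} one should note they follow from \eqref{bs:eqQ} directly (a bijection restricting to $Z$ gives both directions at once). I would then state that $\mI$, $\mI'$ being finitely branching is exactly what makes the bijections of \eqref{bs:eqQ}/\eqref{bs:eqQI} total on finite domains, and that the $U$-conditions \eqref{bs:eqU1}--\eqref{bs:eqU2} for $Z$ follow because $\mI$, $\mI'$ are assumed unreachable-objects-free and every reachable element lies in the domain/range of $Z$ by an induction along reachability paths (as in the proof of Theorem~\ref{theorem: H-M}).

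The main obstacle I anticipate is precisely the claim that the bijection $h$ from \eqref{bs:eqQ} applied to $Z_1$ actually lands inside $Z$ rather than only inside $Z_1$ --- i.e.\ upgrading ``$h(y) \geqPsp y$ for all $y$'' to ``$h(y) \equivPsp y$ for all $y$''. The clean way to see this is a counting argument: within the finite set $\{y \mid r^\mI(x,y)\}$ partition by $\equivPsp$-class (equivalently, by which semi-positive concepts hold, cut down to a finite distinguishing set since branching is finite and only finitely many sub-concepts matter), do likewise on the $\mI'$-side, and observe that a $Z_1$-respecting bijection between the two finite multisets of classes, together with the \emph{reverse} $Z_2$-respecting bijection from \eqref{bs:eqQ} applied on the $\mI'$-side, pins the class-cardinalities to be equal and the matching to be class-preserving. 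That is the only place where real work happens; everything else is bookkeeping that mirrors the already-granted Theorems~\ref{theorem: H-M} and~\ref{theorem: bs-inv-1}. Finally, the ``in particular'' clause is immediate: $Z$ itself is the exhibited bisimulation, and conversely if some $\mLP$-bisimulation $Z'$ relates $x$ and $x'$ then by Theorem~\ref{theorem: bs-inv-1} (invariance of semi-positive concepts under comparisons, and every bisimulation is a comparison) applied in both directions we get $x \equivPsp x'$.
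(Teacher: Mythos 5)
Your proposal follows essentially the same route as the paper's proof: the forward direction via preservation of semi-positive concepts under (the comparison underlying) a bisimulation, and the converse by showing that $\{\tuple{x,x'} \mid x \equivPsp x'\}$ is an $\mLP$-bisimulation, inheriting the comparison conditions from Theorem~\ref{theorem: H-M} applied in both directions and then upgrading the bijection of condition~\eqref{bs:eqQ} so that it respects $\equivPsp$ rather than only $\leqPsp$. The only substantive difference is that you explicitly isolate and sketch the finite combinatorial argument for that upgrade (equal $\equivPsp$-class cardinalities forced by the two opposite $\leqPsp$-respecting bijections on the finite successor sets), a step the paper's proof asserts with a bare ``therefore''; your sketch is sound and in fact supplies a detail the paper leaves implicit.
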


See the appendix for a proof of this theorem.


\begin{corollary}\label{cor: HGSDS}
Let $\mI$ and $\mI'$ be finitely branching interpretations (w.r.t.~$\mLP$) such that, for every $a \in \IN$, $a^\mI \equivPsp a^{\mI'}$. 
Suppose $Q \in \Phi$ and if $U \in \Phi$ then both $\mI$ and $\mI'$ are unreachable-objects-free. 
Then, for every $x \in \Delta^\mI$ and $x' \in \Delta^{\mI'}$, $x \equivPsp x'$ iff $x \equivP x'$.
\end{corollary}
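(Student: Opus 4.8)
The plan is to derive Corollary~\ref{cor: HGSDS} directly from Theorem~\ref{theorem: H-M-3} together with the invariance results already established (Theorem~\ref{theorem: bs-inv-1}, and the fact that $\mLP$-bisimulations preserve \emph{all} concepts of $\mLP$, which underlies Theorem~\ref{theorem: H-M-0}). Both directions are quick once those are in hand.

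For the direction ``$x \equivP x'$ implies $x \equivPsp x'$'': recall that $\mLPsp \subseteq \mLP$, i.e.\ every semi-positive concept is in particular a concept of $\mLP$. Hence if $x$ and $x'$ satisfy exactly the same concepts of $\mLP$, they in particular satisfy exactly the same concepts of $\mLPsp$, which is precisely $x \equivPsp x'$. (More carefully: $x \leqPsp x'$ because $C^\mI(x)$ implies $C^{\mI'}(x')$ for every $C \in \mLP \supseteq \mLPsp$, and symmetrically $x' \leqPsp x$.) This direction uses no hypotheses beyond the inclusion of syntax classes.

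For the converse ``$x \equivPsp x'$ implies $x \equivP x'$'': here I would invoke Theorem~\ref{theorem: H-M-3}. Its hypotheses are exactly the hypotheses of the present corollary ($\mI,\mI'$ finitely branching, $a^\mI \equivPsp a^{\mI'}$ for all $a \in \IN$, $Q \in \Phi$, and the unreachable-objects-free condition when $U \in \Phi$), so the theorem applies verbatim. It yields an $\mLP$-bisimulation $Z$ between $\mI$ and $\mI'$ with $Z(x,x')$. Now I appeal to bisimulation invariance for the full language $\mL_\Phi$: if $Z$ is an $\mLP$-bisimulation and $Z(x,x')$, then for every concept $C$ of $\mLP$ we have $C^\mI(x) \Leftrightarrow C^{\mI'}(x')$. (This is the standard invariance property for $\mLP$-bisimulations — the symmetric counterpart of Lemma~\ref{lemma: bs-inv-1}, and is the fact behind the ``in particular'' clause of Theorem~\ref{theorem: H-M-0}.) Therefore $x \equivP x'$.

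The only subtlety — and the step I would be most careful about — is making sure the invariance of \emph{all} $\mLP$-concepts under $\mLP$-bisimulations is a citable fact at this point in the paper, since the excerpt only states Lemma~\ref{lemma: bs-inv-1} for the one-directional comparison/semi-positive setting. It is, however, implicit in Theorem~\ref{theorem: H-M-0} (taken from~\cite{BSDL-arxiv}), whose ``in particular'' phrasing presupposes that $\equivP$ is itself a bisimulation and hence that bisimulation implies $\equivP$. So I would phrase the argument as: by Theorem~\ref{theorem: H-M-3} there is an $\mLP$-bisimulation linking $x$ and $x'$, and by the (well-known) invariance of $\mLP$-concepts under $\mLP$-bisimulations this forces $x \equivP x'$; the reverse implication is immediate from $\mLPsp \subseteq \mLP$. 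No new calculation is required.
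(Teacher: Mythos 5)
Your proof is correct and follows essentially the same route as the paper, which simply notes that the corollary follows from Theorems~\ref{theorem: H-M-3} and~\ref{theorem: H-M-0}: the nontrivial direction is obtained by extracting an $\mLP$-bisimulation linking $x$ and $x'$ via Theorem~\ref{theorem: H-M-3} and then applying the ``bisimilar implies $\equivP$'' half of Theorem~\ref{theorem: H-M-0} (which, as you observe, is just bisimulation invariance and does not depend on the $a^\mI \equivP a^{\mI'}$ hypothesis), while the easy direction is the inclusion $\mLPsp \subseteq \mLP$.
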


This corollary follows from Theorems~\ref{theorem: H-M-3} and~\ref{theorem: H-M-0}.

\begin{example}
We show that the assumption $Q \in \Phi$ of Theorem~\ref{theorem: H-M-3} is necessary. Let $\Phi = \emptyset$, $\IN = \{a\}$, $\CN = \{A,B\}$, $\RN = \{r\}$ and let $\mI$, $\mI'$ be the interpretations specified as follows.
\begin{itemize}
\item $\Delta^\mI = \{u,v_0,v_1,v_2\}$, $a^\mI = u$, $r^\mI = \{\tuple{u,v_0}, \tuple{u,v_1}, \tuple{u,v_2}\}$, $A^\mI = \{v_1,v_2\}$, $B^\mI = \{v_2\}$,
\item $\Delta^{\mI'} = \{u,v_0,v_2\}$, $a^{\mI'} = u$, $r^{\mI'} = \{\tuple{u,v_0},\tuple{u,v_2}\}$ and $A^{\mI'} = B^{\mI'} = \{v_2\}$.
\end{itemize}
Notice that $\mI'$ is obtained from $\mI$ by deleting $v_1$. 
Observe that there are $\mLP$-comparisons between $\mI$ and $\mI'$ as well as between $\mI'$ and $\mI$, but there is no $\mLP$-bisimulations between $\mI$ and $\mI'$. In particular, $a^\mI \equivPsp a^{\mI'}$, but $a^\mI \not\equivP a^{\mI'}$.
\myend
\end{example}

The point of the above example is that, when $Q \notin \Phi$, if $v_0$, $v_1$, $v_2$ are pairwise different $r$-successors of $u$, $v_0 \leqPsp v_1$ and $v_1 \leqPsp v_2$ then the edge $\tuple{u,v_1} \in r^\mI$ is not essential for the semantics of semi-positive concepts. Also note that, when $Q \notin \Phi$, if $v$ and $v'$ are different $r$-successors of $u$ such that $v \equivPsp v'$ then the edge $\tuple{u,v'} \in r^\mI$ is not essential for the semantics of semi-positive concepts. 

Suppose $Q \notin \Phi$ and let $\mI$ be a finitely branching interpretation. We say that $\mI$ is {\em $\mLPsp$-tidy} if it is unreachable-objects-free and, for every $x,y,y',y'' \in \Delta^\mI$ and every basic role $R$ of~$\mLP$, 
\begin{itemize}
\item if $\{\tuple{x,y},\tuple{x,y'}\} \subseteq R^\mI$ and $y \equivPsp y'$ then $y = y'$,
\item if $\{\tuple{x,y},\tuple{x,y'},\tuple{x,y''}\} \subseteq R^\mI$, $y \leqPsp y'$ and $y' \leqPsp y''$ then $y = y'$ or $y' = y''$ or ($\Self \in \Phi$ and $y' = x$).
\end{itemize}

\begin{theorem} \label{theorem: H-M-4}
Suppose $Q \notin \Phi$. Let $\mI$ and $\mI'$ be finitely branching and $\mLPsp$-tidy interpretations such that, for every $a \in \IN$, $a^\mI \equivPsp a^{\mI'}$. 
Then, for every $x \in \Delta^\mI$ and $x' \in \Delta^{\mI'}$, $x \equivPsp x'$ iff there exists an $\mLP$-bisimulation $Z$ between $\mI$ and $\mI'$ such that $Z(x,x')$ holds. In particular, the relation $\{\tuple{x,x'} \in \Delta^ \mI\times \Delta^{\mI'} \mid x \equivPsp x'\}$ is an $\mLP$-bisimulation between $\mI$ and~$\mI'$.
\end{theorem}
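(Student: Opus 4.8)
The plan is to prove the two implications separately; the ``in particular'' clause is then immediate from the \emph{only if} direction. For the \emph{if} direction no hypothesis on $\mI,\mI'$ is needed beyond Theorem~\ref{theorem: bs-inv-1}: if $Z$ is an $\mLP$-bisimulation between $\mI$ and $\mI'$, then $Z$ is in particular an $\mLP$-comparison between $\mI$ and $\mI'$, and — since every condition defining a bisimulation is either symmetric in the two interpretations or comes paired with its converse — the inverse relation $Z^{-1}$ is an $\mLP$-comparison between $\mI'$ and $\mI$. Applying Theorem~\ref{theorem: bs-inv-1} to $Z$ with $Z(x,x')$ gives $x \leqPsp x'$, and applying it to $Z^{-1}$ with $Z^{-1}(x',x)$ gives $x' \leqPsp x$; hence $x \equivPsp x'$.

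For the \emph{only if} direction I will show that $Z \defeq \{\tuple{u,u'} \in \Delta^\mI \times \Delta^{\mI'} \mid u \equivPsp u'\}$ is an $\mLP$-bisimulation between $\mI$ and $\mI'$. Condition~\eqref{bs:eqA} is the hypothesis on individual names, and, because $A$, $\{a\}$ and $\E r.\Self$ are semi-positive concepts while $\equivPsp$ relates elements satisfying exactly the same semi-positive concepts, conditions~\eqrefp{bs:eqB}, \eqrefp{bs:eqO0} and \eqrefp{bs:eqSelf} hold at once; since $Q \notin \Phi$, \eqref{bs:eqQ} and \eqref{bs:eqQI} are absent. So the real content is the ``forth''/``back'' conditions \eqref{bs:eqC}, \eqref{bs:eqD} (and, when $I \in \Phi$, \eqref{bs:eqI1}, \eqref{bs:eqI2}). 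Once these are known, \eqref{bs:eqU1} and \eqref{bs:eqU2} (when $U \in \Phi$) follow by induction on path length: $\mLPsp$-tidiness includes unreachable-objects-freeness, so every element of $\Delta^\mI$ lies on a basic-role path from some $a^\mI$, and starting from $Z(a^\mI,a^{\mI'})$ one propagates $Z$ along such a path by repeated use of \eqref{bs:eqC}/\eqref{bs:eqI1}; symmetrically for \eqref{bs:eqU2}.

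To verify \eqref{bs:eqC}/\eqref{bs:eqD} (and \eqref{bs:eqI1}/\eqref{bs:eqI2}, for which one reads $R=r^-$ throughout), fix a pair $u\equivPsp u'$ and a basic role $R$ of $\mLP$, and set $V=\{v\mid R^\mI(u,v)\}$, $V'=\{v'\mid R^{\mI'}(u',v')\}$, both finite. From $u\leqPsp u'$ and $u'\leqPsp u$, Theorem~\ref{theorem: H-M} — whose side condition on $U$ is met because tidiness gives unreachable-objects-freeness — provides an $\mLP$-comparison $Z_1$ between $\mI$ and $\mI'$ with $Z_1(u,u')$ and an $\mLP$-comparison $Z_2$ between $\mI'$ and $\mI$ with $Z_2(u',u)$. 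The forth and back conditions of $Z_1$ and $Z_2$ for the role $R$, combined with Lemma~\ref{lemma: bs-inv-1}, then yield: (i) every $v\in V$ satisfies $v\leqPsp v'$ for some $v'\in V'$ and $w'\leqPsp v$ for some $w'\in V'$; (ii) symmetrically, every $v'\in V'$ satisfies $v'\leqPsp v$ for some $v\in V$ and $w\leqPsp v'$ for some $w\in V$. By the first clause of $\mLPsp$-tidiness, $\leqPsp$ restricts to a partial order on $V$ and on $V'$. Call $v\in V$ \emph{good} if $v\equivPsp v'$ for some $v'\in V'$. Using (i)--(ii), choose maps $f,k\colon V\to V'$ and $g,h\colon V'\to V$ with $v\leqPsp f(v)$, $k(v)\leqPsp v$, $w'\leqPsp g(w')$ and $h(w')\leqPsp w'$. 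Then $v\leqPsp (g\circ f)(v)$ and $(h\circ k)(v)\leqPsp v$ for every $v\in V$, so on the finite poset $V$ iterating $g\circ f$ (resp.\ $h\circ k$) from any $v$ reaches a fixpoint $v^*$ with $v\leqPsp v^*$ (resp.\ $v_*$ with $v_*\leqPsp v$); and $g(f(v^*))=v^*$ together with $v^*\leqPsp f(v^*)\leqPsp g(f(v^*))$ forces $v^*\equivPsp f(v^*)$, so $v^*$ is good, and likewise $v_*$ is good. Hence every $v\in V$ is sandwiched $v_*\leqPsp v\leqPsp v^*$ between good members of $V$, and the second clause of $\mLPsp$-tidiness yields $v=v_*$, or $v=v^*$, or $\Self\in\Phi$ with $v=u$. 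In the first two cases $v$ is good; in the last case $R^\mI(u,u)$ holds, hence $u\in(\E r.\Self)^\mI$, hence $R^{\mI'}(u',u')$ by $u\leqPsp u'$, so $u'\in V'$ and $v=u\equivPsp u'$ shows $v$ good there too. Thus every $v\in V$ is good; the symmetric argument shows every $v'\in V'$ has an $\equivPsp$-partner in $V$, which is exactly what \eqref{bs:eqC}/\eqref{bs:eqD} (resp.\ \eqref{bs:eqI1}/\eqref{bs:eqI2}) require for the pair $u\equivPsp u'$.

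I expect the main obstacle to be precisely this last step: passing from the comparison-level information in (i)--(ii) (``each successor is $\leqPsp$-comparable, in both directions, with some successor on the other side'') to the bisimulation-level statement (``each successor has an $\equivPsp$-equivalent partner''). Without the second clause of $\mLPsp$-tidiness a successor may sit strictly between two good successors of the same node, and then $Z=\equivPsp$ fails the forth condition — this is the situation exhibited, for $Q\notin\Phi$, by the example after Theorem~\ref{theorem: H-M-3}. All the remaining verifications are routine given Theorems~\ref{theorem: bs-inv-1} and~\ref{theorem: H-M} and Lemma~\ref{lemma: bs-inv-1}.
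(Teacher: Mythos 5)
Your proposal is correct, and its overall strategy coincides with the paper's: reduce everything to the forth/back conditions for the relation $\equivPsp$, obtain the two $\leqPsp$-comparisons from Theorem~\ref{theorem: H-M} (in both directions, which the hypothesis $a^\mI \equivPsp a^{\mI'}$ licenses), and then use the second clause of $\mLPsp$-tidiness to collapse a three-element $\leqPsp$-chain of $R$-successors of $u$. The difference is in how you produce that chain. The paper's proof does a single ``out-and-back'': from $R^\mI(u,v)$ it picks $y'_1, y'_2 \in V'$ with $y'_1 \leqPsp v \leqPsp y'_2$ (back condition of the reverse comparison, forth condition of the direct one), then pulls these back to $y_1, y_2 \in V$ with $y_1 \leqPsp y'_1$ and $y'_2 \leqPsp y_2$; tidiness applied to $y_1 \leqPsp v \leqPsp y_2$ forces $v = y_1$ or $v = y_2$ (or the $\Self$ exception, dispatched at the outset), and antisymmetry of the resulting two-element sandwich immediately gives $v \equivPsp y'_1$ or $v \equivPsp y'_2$ --- no prior ``goodness'' of $y_1, y_2$ is needed. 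Your fixpoint iteration of $g \circ f$ and $h \circ k$ reaches the same conclusion but does more work: it first manufactures \emph{good} endpoints $v_*, v^*$ and only then sandwiches $v$; the paper's argument is essentially the observation that one round trip already suffices. Your treatment of the ``if'' direction (via $Z$ and $Z^{-1}$ being $\mLP$-comparisons and Theorem~\ref{theorem: bs-inv-1}) is also sound and in fact a little more self-contained than the paper's appeal to Theorem~\ref{theorem: H-M-0}, whose stated hypotheses mention $a^\mI \equivP a^{\mI'}$ even though only the easy invariance half is being used. Both routes buy the same theorem; the paper's is shorter, yours makes the role of antisymmetry (the first tidiness clause) on the successor sets more explicit.
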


See the appendix for a proof of this theorem.


\section{Auto-Bisimulation and Minimization}
\label{section: aut-bis min}

In this section, we recall some results of our manuscript~\cite{BSDL-arxiv}, not published in~\cite{BSDL-CSP}. 

\medskip

\noindent An~$\mLP$-bisimulation between $\mI$ and itself is called an {\em $\mLP$-auto-bisimulation of $\mI$}. An $\mLP$-auto-bisimulation of $\mI$ is said to be the {\em largest} if it is larger than or equal to ($\supseteq$) any other $\mLP$-auto-bisimulation of~$\mI$. 

\begin{proposition}
For every interpretation $\mI$, the largest $\mLP$-auto-bisimulation of $\mI$ exists and is an equivalence relation.
\myend
\end{proposition}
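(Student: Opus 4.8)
The plan is the standard ``largest element via union, equivalence properties via closure under the algebraic operations'' pattern. First I would record the bisimulation counterparts of Proposition~\ref{prop:pr-bis} — they hold by essentially the same straightforward verifications and are among the facts recalled from~\cite{BSDL-arxiv}: the identity relation $\mathit{Id}_\mI = \{\tuple{x,x} \mid x \in \Delta^\mI\}$ is an $\mLP$-bisimulation between $\mI$ and $\mI$; if $Z$ is an $\mLP$-bisimulation between $\mI$ and $\mI'$ then $Z^{-1}$ is an $\mLP$-bisimulation between $\mI'$ and $\mI$; the relational composition of two $\mLP$-bisimulations is an $\mLP$-bisimulation; and the union of any set of $\mLP$-bisimulations between $\mI$ and $\mI'$ is again an $\mLP$-bisimulation between $\mI$ and $\mI'$. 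Verifying these is routine; the only point needing a moment's attention is conditions~\eqref{bs:eqQ} and~\eqref{bs:eqQI}: for the inverse one uses that the inverse of a bijection is a bijection (and is contained in $Z^{-1}$), for the composition that the composite of two bijections is a bijection whose graph is contained in the composite relation, and for the union one simply reuses the witnessing bijection supplied by whichever bisimulation in the family relates the pair in question. Note also that for $\mI = \mI'$ all of the above say: identity, inverse and composition preserve the property of being an $\mLP$-auto-bisimulation of $\mI$.

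Next I would let $\mZ$ be the collection of all $\mLP$-auto-bisimulations of $\mI$. It is nonempty since $\mathit{Id}_\mI \in \mZ$, and by the union property $Z^* \defeq \bigcup \mZ$ is an $\mLP$-auto-bisimulation of $\mI$. By construction $Z \subseteq Z^*$ for every $Z \in \mZ$, so $Z^*$ is the largest $\mLP$-auto-bisimulation of~$\mI$, establishing existence.

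Finally I would check that $Z^*$ is an equivalence relation, each time using maximality of $Z^*$. Reflexivity: $\mathit{Id}_\mI$ is an $\mLP$-auto-bisimulation of $\mI$, hence $\mathit{Id}_\mI \subseteq Z^*$. Symmetry: $(Z^*)^{-1}$ is again an $\mLP$-auto-bisimulation of $\mI$, hence $(Z^*)^{-1} \subseteq Z^*$; applying $\cdot^{-1}$ to this inclusion gives $Z^* \subseteq (Z^*)^{-1}$, so $(Z^*)^{-1} = Z^*$. Transitivity: $Z^* \circ Z^*$ is an $\mLP$-auto-bisimulation of $\mI$, hence $Z^* \circ Z^* \subseteq Z^*$, which is precisely transitivity of $Z^*$. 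I do not expect a genuine obstacle here; the only slightly delicate bookkeeping is ensuring that conditions~\eqref{bs:eqQ} and~\eqref{bs:eqQI} behave well under inverse, composition and union, as indicated in the first paragraph.
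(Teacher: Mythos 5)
Your proof is correct and is the standard argument; the paper itself states this proposition without proof (it is recalled from~\cite{BSDL-arxiv}), and the closure properties you invoke are exactly the bisimulation analogues of Proposition~\ref{prop:pr-bis} (identity, inverse, composition, union), with inverse-closure available precisely because the bisimulation conditions \eqrefp{bs:eqB}, \eqrefp{bs:eqO0}, \eqrefp{bs:eqSelf} are symmetric and \eqref{bs:eqC}/\eqref{bs:eqD}, \eqref{bs:eqI1}/\eqref{bs:eqI2}, \eqref{bs:eqU1}/\eqref{bs:eqU2} come in dual pairs. The derivation of existence via the union of all auto-bisimulations and of reflexivity, symmetry and transitivity from maximality is exactly as expected.
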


Given an interpretation $\mI$, by $\simLP$ we denote the largest $\mLP$-auto-bisimulation of $\mI$, and by $\equivLP$ we denote the binary relation on $\Delta^\mI$ with the property that \mbox{$x\ \equivLP\ x'$} iff $x$ is $\mLP$-equivalent to $x'$. 

\begin{theorem}\label{theorem: LABS}
For every finitely branching interpretation $\mI$, $\equivLP$ is the largest $\mLP$-auto-bisimulation of $\mI$ (i.e.\ the relations $\equivLP$ and $\simLP$ coincide).
\end{theorem}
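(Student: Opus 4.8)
The plan is to establish the two inclusions $\simLP \subseteq \equivLP$ and $\equivLP \subseteq \simLP$ separately. Together they give $\simLP \,=\, \equivLP$, and since $\simLP$ is by definition the largest $\mLP$-auto-bisimulation of $\mI$, it then follows that $\equivLP$ is also the largest one. Only the inclusion $\equivLP \subseteq \simLP$ uses the hypothesis that $\mI$ is finitely branching.

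For $\simLP \subseteq \equivLP$ I would first record the (standard) bisimulation-invariance fact: if $Z$ is an $\mLP$-bisimulation between interpretations $\mI_1$ and $\mI_2$ and $Z(x_1,x_2)$ holds, then $x_1 \in C^{\mI_1}$ iff $x_2 \in C^{\mI_2}$ for every concept $C$ of $\mLP$. This is proved by a simultaneous induction on the structure of concepts and roles, entirely analogous to the proof of Lemma~\ref{lemma: bs-inv-1} but easier: one uses the two-way conditions \eqrefp{bs:eqB}, \eqrefp{bs:eqO0}, \eqrefp{bs:eqSelf} together with \eqref{bs:eqC}, \eqref{bs:eqD} and, when relevant, \eqref{bs:eqI1}, \eqref{bs:eqI2}, \eqref{bs:eqQ}, \eqref{bs:eqQI}, \eqref{bs:eqU1}, \eqref{bs:eqU2}, and one exploits that the defining conditions of an $\mLP$-bisimulation are symmetric, so that $Z^{-1}$ is an $\mLP$-bisimulation between $\mI_2$ and $\mI_1$; this symmetry is exactly what lets the induction handle $\lnot C$. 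Applying the fact with $\mI_1 = \mI_2 = \mI$ and $Z$ any $\mLP$-auto-bisimulation of $\mI$ gives $x \simLP x' \Rightarrow x \equivP x' \Rightarrow x \equivLP x'$.

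For $\equivLP \subseteq \simLP$ it suffices to show that $\equivLP$ is itself an $\mLP$-auto-bisimulation of $\mI$, since $\simLP$ is the largest one. Recall $\equivLP$ is an equivalence relation and, by definition, coincides with $\mLP$-equivalence on $\mI$. The ``label-type'' conditions are immediate: \eqref{bs:eqA} and \eqref{bs:eqU1}, \eqref{bs:eqU2} hold by reflexivity; \eqrefp{bs:eqB}, \eqrefp{bs:eqO0} (when $O\in\Phi$), \eqrefp{bs:eqSelf} (when $\Self\in\Phi$) hold because $A$, $\{a\}$ and $\E r.\Self$ are concepts of $\mLP$, so $\mLP$-equivalent elements agree on them. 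For the ``forth/back'' transition conditions \eqref{bs:eqC}, \eqref{bs:eqD} and, when $I\in\Phi$, \eqref{bs:eqI1}, \eqref{bs:eqI2}, finite branching enters: suppose $x \equivLP x'$ and $r^\mI(x,y)$, and let $y'_1,\dots,y'_k$ be the (finitely many, since $\mI$ is finitely branching) $r$-successors of $x'$. If \eqref{bs:eqC} failed here, then $y \not\equivLP y'_i$ for all $i$, so we may choose concepts $D_1,\dots,D_k$ of $\mLP$ with $y \in D_i^\mI$ and $y'_i \notin D_i^\mI$ (negating $D_i$ if necessary); put $C = D_1 \mand \cdots \mand D_k$, taking $C = \top$ when $k=0$. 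Then $y \in C^\mI$ but no $r$-successor of $x'$ lies in $C^\mI$, hence $x \in (\E r.C)^\mI$ while $x' \notin (\E r.C)^\mI$, contradicting $x \equivP x'$. Conditions \eqref{bs:eqD} follows by symmetry of $\equivLP$; \eqref{bs:eqI1}, \eqref{bs:eqI2} are identical using $\E r^-.C$ (a concept of $\mLP$ when $I\in\Phi$) and noting that ``finitely branching w.r.t.\ $\mLP$'' also bounds the sets of $r$-predecessors.

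The main obstacle is the number-restriction conditions \eqref{bs:eqQ} and \eqref{bs:eqQI} when $Q \in \Phi$. Fix $x \equivLP x'$ and a role name $r$, and let $Y = \{y \mid r^\mI(x,y)\}$ and $Y' = \{y' \mid r^\mI(x',y')\}$, both finite. From \eqref{bs:eqC} and \eqref{bs:eqD}, already verified for $\equivLP$, the $\equivLP$-classes meeting $Y$ are exactly those meeting $Y'$; pick one representative $z_i$ per such class, say $z_1,\dots,z_m$, which are pairwise $\mLP$-inequivalent. For each $i$, isolate $z_i$: for every $j \ne i$ choose a concept of $\mLP$ separating $z_i$ from $z_j$ and conjoin them (together with $\top$) into a concept $C_i$ of $\mLP$ with $z_i \in C_i^\mI$ and $z_j \notin C_i^\mI$ for all $j \ne i$. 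Since $C_i$ is an $\mLP$-concept, the elements of $Y$ (resp.\ of $Y'$) lying in $C_i^\mI$ are precisely those $\equivLP z_i$. Writing $n_i = \#\{y \in Y \mid y \equivLP z_i\}$, we get that $x$ belongs to $(\geq\!n_i\,r.C_i)^\mI$ and to $(\leq\!n_i\,r.C_i)^\mI$, both $\mLP$-concepts since $Q\in\Phi$; as $x \equivP x'$, the same holds for $x'$, so $\#\{y' \in Y' \mid y' \equivLP z_i\} = n_i$. Thus each class contributes equally many elements to $Y$ and to $Y'$; fixing a bijection within each class and taking the union yields a bijection $h: Y \to Y'$ with $h \subseteq \equivLP$. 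Condition \eqref{bs:eqQI} is handled identically with $r^-$ in place of $r$ and the corresponding number restrictions on $r^-$. This completes the verification that $\equivLP$ is an $\mLP$-auto-bisimulation of $\mI$, and hence the theorem.
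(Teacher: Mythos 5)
Your proof is correct. Note, though, that the paper does not actually prove Theorem~\ref{theorem: LABS} here: Section~\ref{section: aut-bis min} explicitly recalls it from the manuscript \cite{BSDL-arxiv}, and the intended in-paper route is simply to specialize Theorem~\ref{theorem: H-M-0} to $\mI' = \mI$ (the diagonal hypothesis $a^\mI \equivP a^{\mI'}$ is then trivial, and, as you observe, the $U$-conditions \eqref{bs:eqU1}--\eqref{bs:eqU2} hold by reflexivity, which is why Theorem~\ref{theorem: LABS} needs no unreachable-objects-free or finiteness proviso) together with bisimulation invariance for the inclusion $\simLP \subseteq \equivLP$. Your self-contained verification follows the same overall Hennessy--Milner template as the appendix proofs of Theorems~\ref{theorem: H-M} and~\ref{theorem: H-M-3}, but your treatment of condition~\eqref{bs:eqQ} is genuinely different: the paper's proof of the analogous step in Theorem~\ref{theorem: H-M} builds the bijection by a greedy matching with backtracking and derives a contradiction from a Hall-type counting concept $\geq\!h\,r.C$, because there it only has the preorder $\leqPsp$ to work with; you instead exploit that $\equivLP$ is an equivalence, isolate each class meeting the successor sets by a separating concept $C_i$, and pin down the class sizes with the pair $\geq\!n_i\,r.C_i$ and $\leq\!n_i\,r.C_i$. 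In the two-way setting of this theorem your argument is simpler and entirely sound; the paper's heavier machinery is only needed for the one-way comparison results. The remaining steps (the induction for invariance of full $\mLP$-concepts under bisimulation, closure of bisimulations under inverse, and the finite-conjunction argument for \eqref{bs:eqC}--\eqref{bs:eqI2}) are standard and correctly executed.
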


An interpretation $\mI$ is said to be {\em minimal} among a class of interpretations if $\mI$ belongs to that class and, for every other interpretation $\mI'$ of that class, $\#\Delta^\mI \leq \#\Delta^{\mI'}$ (the cardinality of $\Delta^\mI$ is less than or equal to the cardinality of~$\Delta^{\mI'}$).

A {\em concept assertion} of $\mLP$ (resp.\ $\mLPsp$) is an expression of the form $C(a)$, where $C$ is a concept of $\mLP$ (resp.\ $\mLPsp$).
We say that an interpretation $\mI$ {\em satisfies} a concept assertion $C(a)$ if $a \in C^\mI$.

\subsection{The Case without $Q$ and $\Self$}

\newcommand{\mIq}{{\mI/_\simLP}}
\newcommand{\mIqQS}{{\mI/_\simLP^{QS}}}

The {\em quotient interpretation} $\mIq$ of $\mI$ w.r.t.\ $\simLP$ is defined as usual: 
\begin{itemize}
\item $\Delta^\mIq = \{[x]_\simLP \mid x \in \Delta^\mI\}$, where $[x]_\simLP$ is the abstract class of $x$ w.r.t.~$\simLP$
\item $a^\mIq = [a^\mI]_\simLP$, for $a \in \IN$
\item $A^\mIq = \{[x]_\simLP \mid x \in A^\mI\}$, for $A \in \CN$
\item $r^\mIq = \{\tuple{[x]_\simLP,[y]_\simLP} \mid \tuple{x,y} \in r^\mI\}$, for $r \in \RN$.
\end{itemize}


\begin{theorem} \label{theorem: JHDFA}
Suppose $\Phi \subseteq \{I,O,U\}$ and let $\mI$ be an unreachable-objects-free interpretation. If $\mIq$ is finitely branching then it is a minimal interpretation that satisfies the same concept assertions of $\mLP$ as $\mI$.
\end{theorem}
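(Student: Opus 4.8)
\emph{Plan.} The statement has two parts: (a) $\mIq$ satisfies the same concept assertions of $\mLP$ as $\mI$, and (b) no interpretation with this property has a smaller domain. For (a) I would first verify that the natural projection $Z_0=\{\tuple{x,[x]_\simLP}\mid x\in\Delta^\mI\}$ is an $\mLP$-bisimulation between $\mI$ and $\mIq$. This is routine: \eqref{bs:eqA}--\eqref{bs:eqD} and (for $I\in\Phi$) \eqref{bs:eqI1}--\eqref{bs:eqI2} come straight from the definition of $\mIq$; \eqref{bs:eqU1}--\eqref{bs:eqU2} hold since $Z_0$ is total and surjective; the conditions for $Q$ and $\Self$ are vacuous as $Q,\Self\notin\Phi$; and \eqrefp{bs:eqO0} holds because $\simLP$, being an auto-bisimulation, satisfies \eqrefp{bs:eqO0} and hence never identifies $a^\mI$ with an element distinct from $a^\mI$. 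The only point using the full strength of $\simLP$ is the ``$\Leftarrow$'' half of \eqrefp{bs:eqB}: if $[x]_\simLP\in A^\mIq$ then some $x_0\simLP x$ has $x_0\in A^\mI$, and \eqrefp{bs:eqB} for $\simLP$ gives $x\in A^\mI$. By the two-way bisimulation invariance of $\mLP$-concepts (from~\cite{BSDL-arxiv}) we obtain $x\equivP [x]_\simLP$ for every $x$, hence in particular $a^\mI\equivP a^\mIq$ for every $a\in\IN$, which is (a). Also $\mIq$ is unreachable-objects-free, since $Z_0$ maps named elements to named elements and basic-role edges of $\mI$ onto basic-role edges of $\mIq$.

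\emph{Distinct classes are $\mLP$-inequivalent.} As $\mIq$ is finitely branching, Theorem~\ref{theorem: LABS} applied to $\mIq$ gives that $\mLP$-equivalence on $\Delta^\mIq$ equals the largest $\mLP$-auto-bisimulation $Z_1$ of $\mIq$. I claim $Z_1$ is the identity: if $[x]_\simLP\,Z_1\,[y]_\simLP$, then $Z_0\circ Z_1\circ Z_0^{-1}$ is an $\mLP$-auto-bisimulation of $\mI$ (composition and converse of $\mLP$-bisimulations are $\mLP$-bisimulations, the bisimulation analogue of Proposition~\ref{prop:pr-bis}, see~\cite{BSDL-arxiv}) relating $x$ and $y$; since $\simLP$ is the \emph{largest} $\mLP$-auto-bisimulation of $\mI$, we get $x\simLP y$, i.e.\ $[x]_\simLP=[y]_\simLP$. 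Hence distinct elements of $\mIq$ are not $\mLP$-equivalent.

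\emph{Minimality.} Let $\mI'$ satisfy the same concept assertions of $\mLP$ as $\mI$; it suffices to build an injection of $\Delta^\mIq$ into $\Delta^{\mI'}$. Since $\mIq$ is finitely branching and every element is reachable by basic-role edges from one of the finitely many $a^\mIq$, $\Delta^\mIq$ is countable, so if $\Delta^{\mI'}$ is infinite we are done; thus we may assume $\mI'$ is finite, hence finitely branching. I would then prove, by induction on the least length of a basic-role path in $\mIq$ from a named element to $[x]$ (well-defined by unreachable-objects-freeness of $\mIq$), that every $[x]\in\Delta^\mIq$ is $\mLP$-equivalent to some element of $\mI'$. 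The base case $[x]=a^\mIq$ uses $a^\mIq\equivP a^{\mI'}$ from part (a). For the step, pick $[y]$ one edge closer with, say, $r^\mIq([y],[x])$ for a basic role $r$ (the case of $r^-$ when $I\in\Phi$ is symmetric), and by the induction hypothesis take $y'\in\Delta^{\mI'}$ with $[y]\equivP y'$. Since $\mI'$ is finitely branching, $y'$ has finitely many $r$-successors $w'_1,\dots,w'_k$; if none were $\mLP$-equivalent to $[x]$ then, using closure of $\mLP$-concepts under $\lnot$, for each $j$ there is an $\mLP$-concept $D_j$ with $[x]\in D_j^\mIq$ and $w'_j\notin D_j^{\mI'}$, whence $[y]\in(\E r.(D_1\mand\cdots\mand D_k))^\mIq$ and so $y'\in(\E r.(D_1\mand\cdots\mand D_k))^{\mI'}$ --- a contradiction. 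Thus some $w'_j$ satisfies every $\mLP$-concept of $[x]$, and (again by closure under $\lnot$) conversely, so $[x]\equivP w'_j$. Choosing for each $[x]$ an $\mLP$-equivalent $f([x])\in\Delta^{\mI'}$, the map $f$ is injective by the previous paragraph ($f([x])=f([y])$ forces $[x]$ and $[y]$ to satisfy the same $\mLP$-concepts in $\mIq$, hence $[x]=[y]$), so $\#\Delta^\mIq\le\#\Delta^{\mI'}$.

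\emph{Main obstacle.} The delicate step is the induction in the minimality part: transferring an $r$-successor of $[y]$ in $\mIq$ to an $\mLP$-equivalent $r$-successor of $y'$ in $\mI'$. This is a local Hennessy--Milner argument and genuinely relies on $\mI'$ being finitely branching, so that the finite conjunction $D_1\mand\cdots\mand D_k$ makes sense; this is exactly why the reduction to finite $\mI'$ via countability of $\Delta^\mIq$ is needed rather than optional. A secondary check is that when $U\in\Phi$ the projection $Z_0$ respects $\E U.C$ and $\V U.C$ (it does, being total and surjective) and that universal-role edges, not being basic-role edges, do not disturb the reachability induction.
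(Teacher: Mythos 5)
The paper does not actually contain a proof of Theorem~\ref{theorem: JHDFA}: Section~\ref{section: aut-bis min} only recalls it from the manuscript~\cite{BSDL-arxiv}, and the appendix gives no argument for it. So your proposal can only be judged on its own merits, and on those merits it is correct and follows the route one would expect: (i) the projection $Z_0=\{\tuple{x,[x]_\simLP}\}$ is an $\mLP$-bisimulation, giving $x\equivP[x]_\simLP$ and hence preservation of concept assertions; (ii) by Theorem~\ref{theorem: LABS} applied to the finitely branching $\mIq$, together with closure of bisimulations under composition and converse, distinct classes of $\mIq$ are $\mLP$-inequivalent; (iii) a Hennessy--Milner-style induction along basic-role paths from named elements embeds $\Delta^\mIq$ injectively into the domain of any competitor $\mI'$, after the (necessary and correctly handled) reduction to finite $\mI'$ via countability of $\Delta^\mIq$. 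All three steps are sound, including the treatment of $O$ (using that \eqrefp{bs:eqO0} for $\simLP$ forces $[a^\mI]_\simLP=\{a^\mI\}$), of $U$ (totality and surjectivity of $Z_0$; universal-role edges playing no part in reachability), and the empty-successor subcase of the inductive step (where $\E r.\top$ already yields the contradiction). One phrasing to tighten: conditions \eqref{bs:eqD} and \eqref{bs:eqI2} for $Z_0$ do not come ``straight from the definition of $\mIq$'' --- an edge $r^\mIq([x],[y])$ is only witnessed by \emph{some} $x_0\in[x]$, and you need the zig/zag conditions of the auto-bisimulation $\simLP$ to transfer that witness to the given representative $x$. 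That is routine but should be said; it is not a gap.
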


\subsection{The Case with $Q$ and/or $\Self$}

For the case when $Q \in \Phi$ or $\Self \in \Phi$, in order to obtain a result similar to Theorem~\ref{theorem: JHDFA}, we introduce QS-interpretations as follows. 

A {\em QS-interpretation} is a tuple $\mI = \langle \Delta^\mI, \cdot^\mI, \QU^\mI, \SE^\mI \rangle$, where 
\begin{itemize}
\item $\langle \Delta^\mI, \cdot^\mI\rangle$ is an interpretation, 
\item $\QU^\mI$ is a function that maps every basic role to a function $\Delta^\mI \times \Delta^\mI \to \mathbb{N}$ such that $\QU^\mI(R)(x,y) > 0$ iff $\tuple{x,y} \in R^\mI$, where $\mathbb{N}$ is the set of natural numbers, 
\item $\SE^\mI$ is a function that maps every role name to a subset of $\Delta^\mI$.
\end{itemize}

If $\mI$ is a QS-interpretation then we redefine
\begin{eqnarray*}
(\E r.\Self)^\mI & = & \{x \in \Delta^\mI \mid x \in \SE^\mI(r)\}\\
(\geq n\,R.C)^\mI & = & \{x \in \Delta^\mI \mid \Sigma\{\QU^\mI(R)(x,y) \mid C^\mI(y)\} \geq n \} \\
(\leq n\,R.C)^\mI & = & \{x \in \Delta^\mI \mid \Sigma\{\QU^\mI(R)(x,y) \mid C^\mI(y)\} \leq n \}.
\end{eqnarray*}

Other notions for interpretations remain unchanged for QS-interpretations.  

For $\mI$ being an interpretation, the {\em quotient QS-interpretation} of $\mI$ w.r.t.\ $\simLP$, denoted by $\mIqQS$, is the QS-interpretation $\mI' = \langle \Delta^{\mI'}, \cdot^{\mI'}, \QU^{\mI'}, \SE^{\mI'}\rangle$ such that: 
\begin{itemize}
\item $\langle \Delta^{\mI'}, \cdot^{\mI'}\rangle$ is the quotient interpretation of $\mI$ w.r.t.\ $\simLP$
\item for every basic role $R$ and every $x,y \in \Delta^\mI$, \[ \QU^{\mI'}(R)([x]_\simLP,[y]_\simLP) = \max_{x' \in [x]_\simLP} \#\{y' \in [y]_\simLP \mid \tuple{x',y'} \in R^\mI\} \]
\item for every role name $r$, 
\[ \SE^{\mI'}(r) = \{ [x]_\simLP \mid \tuple{x,x} \in r^\mI \}. \]
\end{itemize}

Note that, in the case when $Q \in \Phi$, we have 
\[ \QU^{\mI'}(R)([x]_\simLP,[y]_\simLP) = \#\{y' \in [y]_\simLP \mid \tuple{x,y'} \in R^\mI\}. \]

Here is a counterpart of Theorem~\ref{theorem: JHDFA}, with no restrictions on~$\Phi$:

\begin{theorem} \label{theorem: OIFJS}
Let $\mI$ be an unreachable-objects-free interpretation. If $\mIqQS$ is finitely branching then it is a minimal QS-interpretation that satisfies the same concept assertions of $\mLP$ as $\mI$.
\end{theorem}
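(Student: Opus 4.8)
The plan is to split the statement into two halves: (a) $\mIqQS$ does satisfy the same concept assertions of $\mLP$ as $\mI$, and (b) every QS-interpretation $\mI'$ with that property has $\#\Delta^{\mI'}\ge\#\Delta^\mIqQS$. For (a) I would show that the projection $Z_0=\{\tuple{x,[x]_\simLP}\mid x\in\Delta^\mI\}$ is an $\mLP$-bisimulation between $\mI$ and $\mIqQS$ with $Z_0(a^\mI,a^\mIqQS)$ for all $a\in\IN$, and then invoke invariance of $\mLP$-concepts under $\mLP$-bisimulations (the two-directional version of Lemma~\ref{lemma: bs-inv-1}, from~\cite{BSDL-arxiv}) to conclude $a^\mI\equivP a^\mIqQS$. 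Conditions \eqref{bs:eqA}, \eqrefp{bs:eqB}, \eqref{bs:eqC}, \eqref{bs:eqD}, \eqref{bs:eqI1}, \eqref{bs:eqI2}, \eqrefp{bs:eqO0} are inherited from $\simLP$ being an $\mLP$-auto-bisimulation of $\mI$, and \eqref{bs:eqU1}, \eqref{bs:eqU2} hold since $Z_0$ is total and surjective; the only conditions needing real attention are \eqref{bs:eqQ}, \eqref{bs:eqQI}, \eqrefp{bs:eqSelf}, which must be matched against the \emph{modified} QS-semantics. For \eqref{bs:eqQ} the bijection required at $\tuple{x,[x]_\simLP}$ is $y\mapsto[y]_\simLP$, and it is a bijection of the $\QU^\mIqQS$-weighted successor multisets precisely because, for $Q\in\Phi$, $\QU^\mIqQS(r)([x]_\simLP,[y]_\simLP)$ equals $\#\{y'\in[y]_\simLP\mid r^\mI(x,y')\}$ (the displayed formula); \eqrefp{bs:eqSelf} holds because $\SE^\mIqQS(r)$ is exactly the set of classes of $r$-loops and $\simLP$ already satisfies \eqrefp{bs:eqSelf}.

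For (b) I would first record that $\mIqQS$ is unreachable-objects-free (the projection carries basic-role paths to basic-role paths), and that distinct elements of $\Delta^\mIqQS$ are $\mLP$-inequivalent. For the latter: composing any $\mLP$-auto-bisimulation of $\mIqQS$ with $Z_0$ and $Z_0^{-1}$ (converses and compositions of $\mLP$-bisimulations are $\mLP$-bisimulations) gives an $\mLP$-auto-bisimulation of $\mI$, hence one contained in $\simLP$, so the largest $\mLP$-auto-bisimulation of $\mIqQS$ is the identity; since $\mIqQS$ is finitely branching, Theorem~\ref{theorem: LABS} applied to $\mIqQS$ then upgrades this to $\mLP$-inequivalence of distinct elements.

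Now take any QS-interpretation $\mI'$ with the same concept assertions of $\mLP$ as $\mI$; by (a), $\mI'$ and $\mIqQS$ satisfy the same concept assertions, so $a^{\mI'}\equivP a^\mIqQS$ for every $a\in\IN$. The crucial step is the claim: for every $e\in\Delta^\mIqQS$ there exist $a\in\IN$ and a role $R$ of $\mLP$ with $\{y\in\Delta^\mIqQS\mid R^\mIqQS(a^\mIqQS,y)\}=\{e\}$. This is proved by induction on the (finite, by unreachable-objects-freeness) distance of $e$ from the individuals of $\mIqQS$: the base case uses $R=\varepsilon$; for the inductive step, if $d$ is the predecessor of $e$ on a shortest path and $S$ the basic role with $S^\mIqQS(d,e)$, then $d$ has finitely many $S$-successors $e=e_1,\dots,e_m$ in $\mIqQS$, pairwise $\mLP$-inequivalent, so choosing concepts $D_j$ of $\mLP$ with $e\in D_j^\mIqQS$, $e_j\notin D_j^\mIqQS$ and setting $D=D_2\mand\cdots\mand D_m$ (or $\top$ if $m=1$), the role $R_d\circ S\circ(D?)$ — with $R_d$ the role for $d$ from the induction hypothesis — has $e$ as its unique successor from $a^\mIqQS$. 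Given the claim, $\mIqQS$ satisfies the assertions $(\E R.\top)(a)$ and $(\V R.\chi)(a)$ for every concept $\chi$ of $\mLP$ with $e\in\chi^\mIqQS$ (using that $e$ is the \emph{only} $R$-successor of $a^\mIqQS$), hence so does $\mI'$; therefore $a^{\mI'}$ has an $R$-successor $e'$, and every $R$-successor of $a^{\mI'}$ — in particular $e'$ — satisfies all those $\chi$, so applying this to $\chi=C$ and to $\chi=\lnot C$ for each concept $C$ yields $e'\equivP e$. Consequently the sets $\{e'\in\Delta^{\mI'}\mid e'\equivP e\}$, $e\in\Delta^\mIqQS$, are nonempty, and pairwise disjoint — a common $e'$ would give $e_1\equivP e_2$, hence $e_1=e_2$. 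Picking one $e'$ per $e$ gives an injection $\Delta^\mIqQS\hookrightarrow\Delta^{\mI'}$, so $\#\Delta^\mIqQS\le\#\Delta^{\mI'}$, which together with (a) proves that $\mIqQS$ is minimal.

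I expect the main obstacle to be that $\mI'$ is \emph{not} assumed finitely branching, so one cannot obtain a bisimulation between $\mIqQS$ and $\mI'$ by a Hennessy--Milner argument as in Theorem~\ref{theorem: H-M-0}. The device that circumvents this is the uniqueness claim: it lets the whole $\mLP$-type of each $e\in\Delta^\mIqQS$ be pushed into $\mI'$ using only universally quantified concept assertions $(\V R.\chi)(a)$ about \emph{named} individuals, with the role $R$ engineered — from the finitely many, pairwise $\mLP$-inequivalent siblings at each level and the test operator — to single out $e$. A secondary, more mechanical obstacle is the careful verification in part (a) of the $Q$- and $\Self$-bisimulation conditions against the $\QU$- and $\SE$-based QS-semantics.
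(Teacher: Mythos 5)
The paper does not print a proof of Theorem~\ref{theorem: OIFJS}: Section~\ref{section: aut-bis min} only \emph{recalls} it from the manuscript~\cite{BSDL-arxiv}, and the appendix contains no argument for it. So there is nothing in this text to compare you against line by line; judged on its own, your two-part plan (the projection $Z_0=\{\tuple{x,[x]_\simLP}\}$ is a bisimulation, hence the quotient satisfies the same assertions; distinct quotient classes are pairwise $\mLP$-inequivalent and each is pinned down from a named individual, hence injects into any competitor) is the natural architecture for this statement, and your central device for minimality --- building, by induction on reachability, a role $R=R_d\circ S\circ(D?)$ whose unique $R$-successor from $a^{\mIqQS}$ is $e$, and then transporting the full $\mLP$-type of $e$ into $\mI'$ via the assertions $(\E R.\top)(a)$ and $(\V R.\chi)(a)$ --- is correct and is precisely what lets you avoid any finite-branching or Hennessy--Milner assumption on the competitor $\mI'$. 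The disjointness of the sets $\{e'\mid e'\equivP e\}$ and the resulting injection are fine.

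The one place where your write-up leans on machinery the paper does not actually supply: Lemma~\ref{lemma: bs-inv-1}, Theorem~\ref{theorem: H-M-0}, Theorem~\ref{theorem: LABS} and Proposition~\ref{prop:pr-bis} are all stated for ordinary interpretations, while $\mIqQS$ is a QS-interpretation with redefined semantics for $\geq\!n\,R.C$, $\leq\!n\,R.C$ and $\E r.\Self$. Condition~\eqref{bs:eqQ} as written (a bijection $h\subseteq Z$ between successor \emph{sets}) does not preserve the weighted semantics --- a single class $[y]$ with $\QU^{\mIqQS}(r)([x],[y])=3$ must count as three successors --- so before you can ``invoke invariance'' or apply Theorem~\ref{theorem: LABS} to $\mIqQS$ you must restate \eqref{bs:eqQ}, \eqref{bs:eqQI}, \eqrefp{bs:eqSelf} in terms of $\QU$ and $\SE$ (e.g.\ require $\#\{y'\in[y]_\simLP\mid r^\mI(x,y')\}=\QU^{\mIqQS}(r)([x]_\simLP,[y]_\simLP)$) and re-prove the invariance lemma, or else prove $C^\mI(x)\Leftrightarrow C^{\mIqQS}([x]_\simLP)$ directly by induction on $C$. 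You gesture at this (``weighted successor multisets'') but it is the bulk of the remaining work in part~(a). Two smaller points: the displayed identity $\QU^{\mI'}(R)([x],[y])=\#\{y'\in[y]\mid\tuple{x,y'}\in R^\mI\}$ that you use is only available when $Q\in\Phi$ (otherwise only the $\max$ definition holds), which is harmless because then number restrictions are absent from $\mLP$, but you should say so; and when the inequivalence witness for $e$ versus a sibling $e_j$ goes the wrong way you need to pass to $\lnot D_j$, which is legitimate since $\mLP$ (unlike $\mLPsp$) is closed under negation.
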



\section{Minimization Preserving Semi-Positive Concepts}

Suppose $\Phi \subseteq \{O,U,\Self\}$ and let $\mI$ be a finitely branching interpretation such that it is also unreachable-objects-free when $U \in \Phi$. By $\TidyPsp(\mI)$ we denote the maximal $\mLPsp$-tidy sub-interpretation of $\mI$ obtained by modifying $\mI$ as follows:
\begin{itemize}
\item For each $r \in \RN$, if $\{\tuple{x,y},\tuple{x,y'}\} \subseteq r^\mI$, $y \equivPsp y'$, $y \neq y'$ and $y' \neq x$ then delete the pair $\tuple{x,y'}$ from $r^\mI$. 

\item For each $r \in \RN$, if $\{\tuple{x,y},\tuple{x,y'},\tuple{x,y''}\} \subseteq r^\mI$, $y \leqPsp y'$, $y' \leqPsp y''$, $y \not\equivPsp y'$, $y' \not\equivPsp y''$ and ($\Self \notin \Phi$ or $y' \neq x$) then delete the pair $\tuple{x,y'}$ from $r^\mI$.

\item Delete from the domain of $\mI$ all elements not reachable from any $a^\mI$ (with $a \in \IN$) via a path consisting of edges being instances of basic roles of~$\mLP$.
\end{itemize}

\begin{lemma}\label{lemma: HJWIA}
Suppose $\Phi \subseteq \{O,U,\Self\}$ and let $\mI$ be a finitely branching interpretation such that it is also unreachable-objects-free when $U \in \Phi$. Then $\TidyPsp(\mI)$ satisfies the same concept assertions of $\mLPsp$ as $\mI$.
\end{lemma}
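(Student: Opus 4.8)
The plan is to show that $\TidyPsp(\mI)$ and $\mI$ satisfy the same concept assertions of $\mLPsp$ by producing, in both directions, an $\mLP$-comparison linking the named individuals, and then invoking Theorem~\ref{theorem: bs-inv-1}. Write $\mI' = \TidyPsp(\mI)$. First I would verify that $\mI'$ is well-defined and is genuinely an $\mLPsp$-tidy interpretation in the sense of the definition preceding Theorem~\ref{theorem: H-M-4}: the three bullets in the construction are exactly designed to enforce the two tidiness conditions and unreachable-objects-freeness. One should note that under the hypothesis $\Phi \subseteq \{O,U,\Self\}$ (so $Q \notin \Phi$ and $I \notin \Phi$) the relevant relations $\equivPsp$ and $\leqPsp$ on $\Delta^\mI$ do not change when we pass to $\mI'$ --- this needs a small argument because deleting edges could in principle alter which semi-positive concepts hold; I would handle this by running the deletions one at a time and checking that each single deletion of an inessential edge preserves $\equivPsp$-classes, or, more cleanly, by first establishing the comparison in the ``easy'' direction (see below) and using preservation to get one inequality of $\leqPsp$ for free, then bootstrapping.

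The core of the proof is to exhibit two $\mLP$-comparisons. For the direction $\mI \to \mI'$ (every assertion true in $\mI$ is true in $\mI'$): since $\mI'$ is a sub-interpretation, the obvious candidate is $Z_1 = \{\tuple{x,x} \mid x \in \Delta^{\mI'}\}$, but this fails conditions \eqref{bs:eqC} and \eqref{bs:eqU1}: an edge $\tuple{x,y'} \in r^\mI$ that was deleted, or an unreachable element $x$, has no home. The fix is to take $Z_1 = \{\tuple{x,x'} \in \Delta^\mI \times \Delta^{\mI'} \mid x' \in \Delta^{\mI'} \text{ and } x \equivPsp x'\}$ when $x$ is reachable, more precisely to map each $x \in \Delta^\mI$ that survives to itself and each deleted or unreachable element to a suitable ``replacement'' --- but unreachable elements need not have a $\equivPsp$-equivalent survivor, which is why the hypothesis restricts to $\Phi$ without $U$ unless $\mI$ is unreachable-objects-free, in which case there are no unreachable elements to begin with (the last bullet then deletes nothing). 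So for the $\mI\to\mI'$ direction I would take $Z_1 = \{\tuple{x,x'} \mid x,x' \in \Delta^{\mI'},\ x = x'\} \cup \{\tuple{x,x'} \mid x \in \Delta^\mI\setminus\Delta^{\mI'},\ x'\in\Delta^{\mI'},\ x \equivPsp x'\}$: condition \eqref{bs:eqC} is met because whenever $r^\mI(x,y)$ and the edge is deleted, the deletion rule guarantees a retained edge $r^\mI(x,y^\ast)$ with $y \leqPsp y^\ast$ (or $y \equivPsp y^\ast$), hence $r^{\mI'}(x,y^\ast)$ and $Z_1(y,y^\ast)$; condition \eqref{bs:eqD} is trivial since $r^{\mI'}\subseteq r^\mI$; \eqref{bs:eqB} holds because $x \equivPsp x'$ (or $x=x'$) transfers atomic concepts; the $O$ and $\Self$ conditions \eqref{bs:eqO0}, \eqref{bs:eqSelf} need a brief check using that named individuals and reflexive points are never deleted (the construction's side conditions $y'\neq x$ and ``$\Self\notin\Phi$ or $y'\neq x$'' ensure reflexive loops survive, and nominals survive because $\{a\}$ is a semi-positive concept so $a^\mI \not\equivPsp$ anything distinct forces $a^\mI$ to be retained). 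For the reverse direction $\mI' \to \mI$, the identity $Z_2 = \{\tuple{x',x'} \mid x' \in \Delta^{\mI'}\}$ works directly: \eqref{bs:eqC} is trivial since $r^{\mI'}\subseteq r^\mI$, \eqref{bs:eqD} needs that every $r^\mI$-successor of a surviving $x'$ is $\leqPsp$-below some surviving $r^{\mI'}$-successor, which again is exactly what the deletion rules leave behind, and the remaining conditions are immediate.

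With $Z_1$ and $Z_2$ in hand, Theorem~\ref{theorem: bs-inv-1} gives: $a^\mI \in C^\mI \Rightarrow a^{\mI'} \in C^{\mI'}$ for all semi-positive $C$ (via $Z_1$, using $Z_1(a^\mI,a^{\mI'})$ which holds because $a^\mI$ is retained), and conversely $a^{\mI'} \in C^{\mI'} \Rightarrow a^\mI \in C^\mI$ (via $Z_2$). Hence $\mI'$ and $\mI$ satisfy the same concept assertions of $\mLPsp$, which is the claim. The main obstacle I anticipate is the bookkeeping in verifying \eqref{bs:eqC} for $Z_1$ when the deleted edge's target $y$ is itself subsequently affected --- i.e.\ making precise that after performing \emph{all} deletions simultaneously there is still a retained successor $\equivPsp$- or $\leqPsp$-above every removed one; the clean way is to argue that $\TidyPsp(\mI)$ is exactly the fixpoint of iterated single-edge deletions and that $\equivPsp$/$\leqPsp$ are invariant along the way, so that the ``witness'' successors can be chosen among the survivors, together with finiteness of branching to guarantee the iteration terminates and maxima exist. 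A secondary subtlety is confirming that deleting an inessential edge does not change $\leqPsp$, for which one uses precisely the observation recorded after the Example (that such an edge is not essential for the semantics of semi-positive concepts), formalized via a comparison argument as above.
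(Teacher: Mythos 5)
Your overall strategy --- exhibit an $\mLP$-comparison in each direction and invoke Theorem~\ref{theorem: bs-inv-1} --- is exactly the paper's, but the two concrete relations you write down are not $\mLP$-comparisons, and the failure is at the heart of the matter rather than in the bookkeeping you flag at the end. Consider $Z_1$ and condition~\eqref{bs:eqC}: suppose $x \in \Delta^{\mI'}$, the edge $\tuple{x,y} \in r^\mI$ is deleted by the second rule (so the retained witness $y''$ satisfies $y \leqPsp y''$ but not $y \equivPsp y''$), and $y$ itself survives in $\Delta^{\mI'}$ because it is reachable along another path. Then your $Z_1$ relates $y$ only to $y$ itself, $r^{\mI'}(x,y)$ is false, and $Z_1(y,y'')$ does not hold under your own definition (that would require $y \notin \Delta^{\mI'}$ and $y \equivPsp y''$). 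So~\eqref{bs:eqC} fails. Symmetrically, the identity $Z_2$ fails~\eqref{bs:eqD} as a comparison from $\mI'$ to $\mI$: for a surviving $x'$ with a deleted edge $\tuple{x',y} \in r^\mI$ you need some $y'$ with $r^{\mI'}(x',y')$ and $Z_2(y',y)$, which forces $y' = y$ and hence the edge to be retained. The idea you are missing is that the comparison must be allowed to pair an element $y$ of $\mI$ with a \emph{different} element $y'$ of $\mI'$ satisfying only the one-sided $y \leqPsp y'$ (respectively $y' \leqPsp y$), even when both elements survive; no relation built from the identity plus $\equivPsp$-matching of deleted elements can achieve this.

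The paper's proof repairs exactly this point: it takes $Z$ (resp.\ $Z'$) to be the smallest relation containing $\tuple{a^\mI,a^\mI}$ for every $a \in \Sigma_I$ and closed under the rule that from $Z(x,x')$, $r^\mI(x,y)$, $r^{\mI'}(x',y')$ and $y \leqPsp y'$ (resp.\ $y' \leqPsp y$) one infers $Z(y,y')$ (resp.\ $Z'(y',y)$). Generating the relation recursively from the named individuals pairs each original successor with a retained successor lying $\leqPsp$-above it (for $Z$) or $\leqPsp$-below it (for $Z'$), which is precisely what the two deletion rules guarantee to exist, and condition~\eqref{bs:eqB} only needs the one-directional $\leqPsp$. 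Your closing remarks about ``bootstrapping'' and iterating single-edge deletions gesture toward this, but as written the relations are wrong and the proof does not go through. (A minor further slip: even when $\mI$ is unreachable-objects-free, the edge deletions can disconnect elements, so the last bullet of the construction may still remove something.)
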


\begin{proof}
Let $\mI' = \TidyPsp(\mI)$ and let $Z$, $Z'$ be the smallest binary relations such that the following conditions hold for every $a \in \Sigma_I$, $r \in \RN$, $x,y \in \Delta^\mI$, $x',y' \in \Delta^{\mI'}\,$:
\begin{itemize}
\item $Z(a^\mI,a^\mI)$ and $Z'(a^\mI,a^\mI)$, 
\item $Z(x,x') \land r^\mI(x,y) \land r^{\mI'}(x',y') \land y \leqPsp y' \Rightarrow Z(y,y')$, 
\item $Z'(x',x) \land r^\mI(x,y) \land r^{\mI'}(x',y') \land y' \leqPsp y \Rightarrow Z'(y',y)$. 
\end{itemize}

It is easy to see that $Z$ is an $\mLP$-comparison between $\mI$ and $\mI'$, and $Z'$ is an $\mLP$-comparison between $\mI'$ and $\mI$. Therefore, by Theorem~\ref{theorem: bs-inv-1}, $\mI'$ and $\mI$ satisfy the same concept assertions of $\mLPsp$.
\myend
\end{proof}

\begin{theorem} \label{theorem: OFNSA}
Suppose $\Phi \subseteq \{O,U,\Self\}$. Let $\mI_0$ and $\mI'_0$ be finitely branching interpretations such that they are also unreachable-objects-free when $U \in \Phi$ and they satisfy the same concept assertions of $\mLPsp$. Let $\mI = \TidyPsp(\mI_0)$, $\mI_2 = \mIq$ if $\Self \notin \Phi$, and $\mI_2 = \mIqQS$ if $\Self \in \Phi$. Then $\mI_2$ satisfies the same concept assertions of $\mLPsp$ as $\mI'_0$ and $\# \Delta^{\mI_2} \leq \# \Delta^{\mI'_0}$. 
\end{theorem}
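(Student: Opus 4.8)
The plan is to chain together the results already established, using $\TidyPsp$ to reduce to the $\mLPsp$-tidy case and then quotienting by the largest auto-bisimulation. First I would apply Lemma~\ref{lemma: HJWIA} to $\mI_0$: since $\Phi \subseteq \{O,U,\Self\}$ and $\mI_0$ is finitely branching (and unreachable-objects-free when $U \in \Phi$), $\mI = \TidyPsp(\mI_0)$ satisfies the same concept assertions of $\mLPsp$ as $\mI_0$, hence the same as $\mI'_0$. By construction $\mI$ is $\mLPsp$-tidy, finitely branching, and unreachable-objects-free.

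Next I would establish that $\mI_2$ satisfies the same concept assertions of $\mLPsp$ as $\mI$, which by the previous paragraph gives the same as $\mI'_0$. Here the key point is that the quotient map $x \mapsto [x]_\simLP$ induces an $\mLP$-bisimulation between $\mI$ and $\mI_2$ (this is the standard fact underlying Theorems~\ref{theorem: JHDFA} and~\ref{theorem: OIFJS}, in the two cases $\Self \notin \Phi$ and $\Self \in \Phi$ respectively); an $\mLP$-bisimulation is in particular an $\mLP$-comparison in both directions, so by Theorem~\ref{theorem: bs-inv-1} the semi-positive concept assertions are preserved both ways. Since $\Self \notin \Phi$ makes $\mIq$ an ordinary interpretation while $\Self \in \Phi$ requires the QS-machinery, I would invoke Theorem~\ref{theorem: JHDFA} in the first case and Theorem~\ref{theorem: OIFJS} in the second to get that $\mI_2$ even agrees with $\mI$ on all concept assertions of $\mLP$ (a fortiori of $\mLPsp$).

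The cardinality bound $\#\Delta^{\mI_2} \le \#\Delta^{\mI'_0}$ is where the real content lies, and this is the step I expect to be the main obstacle. The idea is to show that $\#\Delta^{\mI_2}$ equals the number of $\equivPsp$-classes among elements reachable in $\mI$, and then to exhibit, for each such class, a witnessing element of $\Delta^{\mI'_0}$ with the classes pairwise distinct. Concretely: because $\mI$ is $\mLPsp$-tidy and $Q \notin \Phi$, Theorem~\ref{theorem: H-M-4} applies, so on $\mI$ the relation $\equivPsp$ coincides with the largest auto-bisimulation $\simLP$; hence $\Delta^{\mI_2}$ is exactly the set of $\equivPsp$-classes of $\mI$. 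Now take two non-$\simLP$-equivalent elements $x_1, x_2 \in \Delta^\mI$; each is reachable from some $a_i^\mI$, and using $a_i^\mI \equivPsp a_i^{\mI'_0}$ together with the Hennessy-Milner style Theorem~\ref{theorem: H-M} (applied to $\mI$ and $\mI'_0$, both finitely branching, with the $U$-hypothesis in place) one obtains $x_i' \in \Delta^{\mI'_0}$ with $x_i \equivPsp x_i'$; if $x_1'$ and $x_2'$ were $\mLP$-equivalent-enough to be identified, transitivity of $\leqPsp$ would force $x_1 \equivPsp x_2$, a contradiction. This gives an injection from $\Delta^{\mI_2}$ into $\Delta^{\mI'_0}$. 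The delicate bookkeeping will be in the $\Self \in \Phi$ case, where $\mI_2 = \mIqQS$ is a QS-interpretation and one must check that the QS-semantics of $\leq n\,r.(\lnot C)$ and $\E r.\Self$ interact correctly with $\simLP$-classes — but the needed facts are exactly those packaged in Theorem~\ref{theorem: OIFJS}, so no new work is required there. Finally, combining "same $\mLPsp$-concept assertions" with the injection yields both conclusions of the theorem.
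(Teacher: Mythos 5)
Your first half (that $\mI_2$ satisfies the same concept assertions of $\mLPsp$ as $\mI'_0$, via Lemma~\ref{lemma: HJWIA} and the quotient theorems) is correct and agrees with the paper. The gap is in the cardinality bound, precisely at the step where you claim that applying Theorem~\ref{theorem: H-M} to the pair $(\mI,\mI'_0)$ yields, for each representative $x_i$ of an $\equivPsp$-class of $\mI$, an element $x'_i\in\Delta^{\mI'_0}$ with $x_i\equivPsp x'_i$. Theorem~\ref{theorem: H-M} only provides an $\mLP$-\emph{comparison} (the one-directional relation $\leqPsp$); transporting $x_i$ along a path from $a_i^\mI$ through that comparison yields only some $x'_i$ with $x_i\leqPsp x'_i$, and running the comparison in the opposite direction yields some other element $x''_i$ with $x''_i\leqPsp x_i$, with nothing forcing these witnesses to coincide. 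Without genuine $\equivPsp$-witnesses, your injectivity argument (which needs $x_1\equivPsp x'_1=x'_2\equivPsp x_2\Rightarrow x_1\equivPsp x_2$) collapses. Upgrading comparisons to a bisimulation --- which is what actually produces $\equivPsp$-partners --- is exactly the content of Theorem~\ref{theorem: H-M-4}, and that theorem requires \emph{both} interpretations to be $\mLPsp$-tidy; $\mI'_0$ is not tidy, so it cannot be invoked for the pair $(\mI,\mI'_0)$. (The paper's own example shows how badly $\leqPsp$-witnesses can fail to be $\equivPsp$-witnesses when one side is untidy.)

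The missing idea, and the paper's actual route, is to tidy the second interpretation as well: set $\mI'=\TidyPsp(\mI'_0)$, so that by Lemma~\ref{lemma: HJWIA} the tidy interpretations $\mI$ and $\mI'$ satisfy the same concept assertions of $\mLPsp$; then Theorem~\ref{theorem: H-M-4} gives an $\mLP$-bisimulation between them, and Theorem~\ref{theorem: H-M-0} upgrades this to agreement on all concept assertions of the full language $\mLP$. The cardinality bound then follows from the \emph{minimality} clauses of Theorems~\ref{theorem: JHDFA} and~\ref{theorem: OIFJS}, which you cite but never exploit for this purpose: the quotients $\mI_2$ and $\mI'_2$ of $\mI$ and $\mI'$ are both minimal among interpretations satisfying that common set of $\mLP$-assertions, hence $\#\Delta^{\mI_2}=\#\Delta^{\mI'_2}\leq\#\Delta^{\mI'}\leq\#\Delta^{\mI'_0}$, with no explicit injection needed. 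If you insist on an explicit injection, you must at least route the $\equivPsp$-witnesses through the tidy $\mI'$ rather than directly into $\mI'_0$.
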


\begin{proof}
Let $\mI' = \TidyPsp(\mI'_0)$.
By Lemma~\ref{lemma: HJWIA}, $\mI$ and $\mI'$ satisfy the same concept assertions of $\mLPsp$. 
Consequently, by Theorem~\ref{theorem: H-M-4}, there exists an $\mLP$-bisimulation between $\mI$ and $\mI'$. By Theorem~\ref{theorem: H-M-0}, it follows that $\mI$ and $\mI'$ satisfy the same concept assertions of $\mLP$. 
If $\Self \notin \Phi$ then let $\mI'_2 = \mI'/_{\sim_{\,\Phi,\mI'}}$, else let $\mI'_2 = \mI'/_{\sim_{\,\Phi,\mI'}}^{QS}$. 
By Theorems~\ref{theorem: JHDFA} and~\ref{theorem: OIFJS}, $\# \Delta^{\mI_2} = \# \Delta^{\mI'_2}$. Since $\# \Delta^{\mI'_2} \leq \# \Delta^{\mI'_0}$, it follows that $\# \Delta^{\mI_2} \leq \# \Delta^{\mI'_0}$. 
\myend
\end{proof}

\begin{theorem} \label{theorem: IUFBN}
Suppose $Q \in \Phi$. Let $\mI$ and $\mI'$ be finitely branching interpretations such that they are also unreachable-objects-free when $U \in \Phi$ and they satisfy the same concept assertions of $\mLPsp$. Then $\mI_2 = \mIqQS$ is a QS-interpretation that satisfies the same concept assertions of $\mLPsp$ as $\mI'$ and $\# \Delta^{\mI_2} \leq \# \Delta^{\mI'}$. 
\end{theorem}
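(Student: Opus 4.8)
The plan is to reduce the hypothesis on semi-positive concept assertions to one on ordinary concept assertions of $\mLP$ via Corollary~\ref{cor: HGSDS}, and then to read off both claims from the minimization theorem for the full language, Theorem~\ref{theorem: OIFJS}.

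\emph{Step 1: from $\mLPsp$ to $\mLP$.} Since $\mI$ and $\mI'$ satisfy the same concept assertions of $\mLPsp$, we have $a^\mI \equivPsp a^{\mI'}$ for every $a \in \IN$. All hypotheses of Corollary~\ref{cor: HGSDS} now hold ($Q \in \Phi$; both interpretations are unreachable-objects-free when $U \in \Phi$; and $a^\mI \equivPsp a^{\mI'}$ for all $a$), so instantiating its conclusion at $x := a^\mI$, $x' := a^{\mI'}$ yields $a^\mI \equivP a^{\mI'}$ for every $a \in \IN$; that is, $\mI$ and $\mI'$ satisfy the same concept assertions of $\mLP$.

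\emph{Step 2: minimization.} I would first check that $\mI_2 = \mIqQS$ is finitely branching: since $\mI$ is finitely branching and $\simLP$ is an $\mLP$-auto-bisimulation, for each basic role $R$ the $R$-successors of $[x]_\simLP$ in $\mI_2$ are precisely the $\simLP$-classes of the (finitely many) $R$-successors of $x$ in $\mI$. Hence Theorem~\ref{theorem: OIFJS} applies and gives that $\mI_2$ is a minimal QS-interpretation among those satisfying the same concept assertions of $\mLP$ as $\mI$. With Step~1 this means $\mI_2$ satisfies the same concept assertions of $\mLP$, and hence of $\mLPsp$, as $\mI'$, which is the first assertion. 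For the cardinality bound, view $\mI'$ as a QS-interpretation with $\QU^{\mI'}(R)(x',y') := 1$ if $R^{\mI'}(x',y')$ and $0$ otherwise and $\SE^{\mI'}(r) := \{x' \mid r^{\mI'}(x',x')\}$; under the QS-semantics this validates exactly the same concepts as the original $\mI'$, hence it lies in the class over which $\mI_2$ is minimal, so $\# \Delta^{\mI_2} \leq \# \Delta^{\mI'}$.

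\emph{Where care is needed.} Theorem~\ref{theorem: OIFJS} additionally requires $\mI$ to be unreachable-objects-free, which is guaranteed here only when $U \in \Phi$. For $U \notin \Phi$, I would first replace $\mI$ by the sub-interpretation $\mI_1$ obtained by deleting every element not reachable from any $a^\mI$ via a path of basic-role edges; this alters neither the $\mLP$- nor the $\mLPsp$-concept assertions that are satisfied (when $U \notin \Phi$, $\mLP$-concepts cannot see unreachable elements), and one verifies that $\sim_{\Phi,\mI_1}$ is the restriction of $\simLP$ to $\Delta^{\mI_1}$, so that carrying out Step~2 on $\mI_1$ produces a QS-interpretation of cardinality $\leq \# \Delta^{\mI'}$ satisfying the same concept assertions of $\mLPsp$ as $\mI'$. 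I expect this reduction — together with the routine verifications in Step~2 (the semantic identity for the QS-view of $\mI'$, and the finite-branchingness of $\mI_2$) — to be the only non-routine part; the backbone is simply Corollary~\ref{cor: HGSDS} followed by Theorem~\ref{theorem: OIFJS}, mirroring how the $Q \notin \Phi$ counterpart, Theorem~\ref{theorem: OFNSA}, chains Theorem~\ref{theorem: H-M-4} with Theorems~\ref{theorem: JHDFA} and~\ref{theorem: OIFJS}.
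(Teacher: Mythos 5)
Your proposal follows essentially the same route as the paper's proof: the paper chains Theorem~\ref{theorem: H-M-3} with Theorem~\ref{theorem: H-M-0} (which is precisely the content of Corollary~\ref{cor: HGSDS}) to upgrade $\mLPsp$-equivalence of the named individuals to $\mLP$-equivalence, and then invokes Theorem~\ref{theorem: OIFJS}. There are two points of divergence worth recording. First, for the cardinality bound the paper compares $\mI_2$ with $\mI'/_{\sim_{\,\Phi,\mI'}}^{QS}$ (both minimal by Theorem~\ref{theorem: OIFJS}, hence equinumerous, and the latter is trivially no larger than $\mI'$), whereas you place the QS-view of $\mI'$ itself in the class over which $\mI_2$ is minimal; both work, and yours is marginally more direct. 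Second, and more substantively, your worry about the unreachable-objects-free precondition of Theorem~\ref{theorem: OIFJS} is warranted: the paper applies that theorem without comment, even though when $U \notin \Phi$ the hypotheses of Theorem~\ref{theorem: IUFBN} do not guarantee that $\mI$ is unreachable-objects-free, and an unreachable part consisting of pairwise non-bisimilar elements would survive the quotient and can defeat the cardinality bound for $\mIqQS$ as literally defined. Your reduction to the reachable sub-interpretation is the natural repair, with the caveat that it establishes the statement for the quotient of the reachable part rather than for $\mIqQS$ itself, so the theorem should really be read with that proviso (or with unreachable-objects-freeness assumed outright). Your explicit check that $\mI_2$ is finitely branching likewise discharges a hypothesis of Theorem~\ref{theorem: OIFJS} that the paper leaves implicit.
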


\begin{proof}
Let $\mI'_2 = \mI'/_{\sim_{\,\Phi,\mI'}}^{QS}$.
By Theorem~\ref{theorem: H-M-3}, there exists an $\mLP$-bisimulation between $\mI$ and $\mI'$. By Theorem~\ref{theorem: H-M-0}, it follows that $\mI$ and $\mI'$ satisfy the same concept assertions of $\mLP$. Hence, by Theorem~\ref{theorem: OIFJS}, $\# \Delta^{\mI_2} = \# \Delta^{\mI'_2}$. Since $\# \Delta^{\mI'_2} \leq \# \Delta^{\mI'}$, it follows that $\# \Delta^{\mI_2} \leq \# \Delta^{\mI'}$.
\myend
\end{proof}

Notice that minimization of interpretations that preserves semi-positive concepts for the case when $Q \notin \Phi$ and $I \in \Phi$ is not investigated in this section.


\section{Conclusions}

We have studied bisimulation-based comparisons between interpretations in a reasonably systematic way for a large class of useful description logics and obtained novel results on ``characterizing bisimulation by semi-positive concepts'' and ``minimization preserving semi-positive concepts''. 

\medskip
\noindent
{\bf Acknowledgments.} This work was supported by the Polish National Science Centre (NCN) under Grant No.~2011/01/B/ST6/02759.



\newpage

\appendix

\section{Proofs}

\subsection*{Proof of Lemma~\ref{lemma: bs-inv-1}}

We prove this lemma by induction on the structures of $C$, $R$ and $S$.

Consider the assertion~\eqref{bs:eqC-2}.
Suppose $Z(x,x')$ and $R^\mI(x,y)$ hold. By induction on the structure of~$R$ we prove that there exists $y' \in \Delta^{\mI'}$ such that $Z(y,y')$ and $R^{\mI'}(x',y')$ hold. The base case occurs when $R$ is a role name and the assertion for it follows from~\eqref{bs:eqC}. The induction steps are given below.
\begin{itemize}
\item Case $R = \varepsilon$ is trivial.
\item Case $R = R_1 \circ R_2$, where $R_1$ and $R_2$ are roles of $\mLPspE$: We have that $(R_1\circ R_2)^\mI(x,y)$ holds. Hence, there exists $z\in \Delta^\mI$ such that $R_1^\mI(x,z)$ and $R_2^\mI(z,y)$ hold. By the inductive assumption of~\eqref{bs:eqC-2}, there exists $z' \in \Delta^{\mI'}$ such that $Z(z,z')$ and $R_1^{\mI'}(x',z')$ hold, and there exists $y' \in \Delta^{\mI'}$ such that $Z(y,y')$ and
$R_2^{\mI'}(z',y')$ hold. Since $R_1^{\mI'}(x',z')$ and $R_2^{\mI'}(z',y')$ hold, we have that $(R_1 \circ R_2)^{\mI'}(x',y') $ holds, i.e. $R^{\mI'}(x',y')$ holds.

\item Case $R = R_1 \sqcup R_2$, where $R_1$ and $R_2$ are roles of $\mLPspE$, is trivial.

\item Case $R = R_1^*$, where $R_1$ is a role of $\mLPspE$: Since $R^\mI(x,y)$ holds, there exists $x_0, \ldots, x_k \in \Delta^\mI$ such that $x_0 = x$, $x_k = y$ and, for $1 \leq i \leq k$, $R_1^\mI(x_{i-1},x_i)$ holds. Let $x'_0 = x'$. For each $1 \leq i \leq k$, since $Z(x_{i-1},x'_{i-1})$ and $R_1^\mI(x_{i-1},x_i)$ hold, by the inductive assumption of~\eqref{bs:eqC-2}, there exists $x'_i \in \Delta^\mI$ such that $Z(x_i,x'_i)$ and $R_1^{\mI'}(x'_{i-1},x'_i)$ hold. Hence, $Z(x_k,x'_k)$ and $(R_1^*)^{\mI'}(x'_0,x'_k)$ hold. Let $y'= x'_k$. Thus, $Z(y,y')$ and $R^{\mI'}(x',y')$ hold.

\item Case $R = (D?)$, where $D$ is a concept of $\mLPsp$: By the definition of $(D?)^\mI$, we have that $D^\mI(x)$ holds and $x = y$. By the inductive assumption of~\eqref{bs:eqB-2}, $D^{\mI'}(x')$ holds, and therefore $R^{\mI'}(x',x')$ holds. By choosing $y' = x'$, we have that $Z(y,y')$ and $R^{\mI'}(x',y')$ hold.

\item Case $I \in \Phi$ and $R = r^-$: The assertion for this case follows from~\eqref{bs:eqI1}.
\end{itemize}

The assertion~\eqref{bs:eqD-2} can be proved analogously as for~\eqref{bs:eqC-2} except for the case $S = (\neg C)?$, where $C$ is a concept of $\mLPsp$. The proof for this case is as follows. 
Suppose $Z(x,x')$ and $S^{\mI'}(x',y')$ hold. 
Thus, $(\neg C)^{\mI'}(x')$ holds and $x'=y'$. By the contrapositive of the inductive assumption of~\eqref{bs:eqB-2}, it follows that $(\neg C)^\mI(x)$ holds. By choosing $y=x$, $Z(y,y')$ and $S^\mI(x,y)$ hold.

Consider the assertion~\eqref{bs:eqB-2}.
Suppose $Z(x,x')$ and $C^\mI(x)$ hold, where $C$ is a concept of $\mLPsp$. We show that $C^{\mI'}(x')$ holds.
The cases when $C$ is of the form $\top$, $\bot$, $A$, $D \mor D'$ or $D \mand D'$ are trivial.
\begin{itemize}
\item Case $C = \E R.D$, where $R$ is a role of $\mLPspE$ and $D$ is a concept of $\mLPsp$: Since $(\E R.D)^\mI(x)$ holds, there exists $y \in \Delta^\mI$ such that $R^\mI(x,y)$ and $D^\mI(y)$ hold. By the inductive assumption of~\eqref{bs:eqC-2} (proved earlier), there exists $y' \in \Delta^{\mI'}$ such that $Z(y,y')$ and $R^{\mI'}(x',y')$ hold. By the inductive assumption of~\eqref{bs:eqB-2}, $D^{\mI'}(y')$ holds. Therefore, $C^{\mI'}(x')$ holds.

\item Case $C = \V S.D$, where $S$ is a role of $\mLPspV$ and $D$ is a concept of $\mLPsp$: Let $y'$ be an arbitrary element of $\Delta^{\mI'}$ such that $S^{\mI'}(x',y')$ holds. We show that $D^{\mI'}(y')$ holds. By the inductive assumption of~\eqref{bs:eqD-2} (proved earlier), there exists $y \in \Delta^\mI$ such that $Z(y,y')$ and $S^\mI(x,y)$ hold. Since $(\V S.D)^\mI(y)$ holds, it follows that $D^\mI(y)$ holds. Therefore, by the inductive assumption of~\eqref{bs:eqB-2}, it follows that $D^{\mI'}(y')$ holds.

\item Case $O \in \Phi$ and $C = \{a\}$: Since $\{a\}^\mI(x)$ holds, we have that $x = a^\mI$. By the condition~\eqref{bs:eqO0}, it follows that $x' = a^{\mI'}$. Hence $C^{\mI'}(x')$ holds.

\item Case $\Self \in \Phi$ and $C = \E r.\Self$: Since $(\E r.\Self)^\mI(x)$ holds, we have that $r^\mI(x,x)$ holds. By the condition~\eqref{bs:eqSelf}, it follows that $r^{\mI'}(x',x')$ holds. Hence $C^{\mI'}(x')$ holds.

\item Case $Q \in \Phi$ and $C = (\geq\!n\,r.D)$, where $D$ is a concept of $\mLPsp$: By the condition~\eqref{bs:eqQ}, there exists a bijection $h: \{y \mid r^\mI(x,y)\} \to \{y' \mid r^{\mI'}(x',y')\}$ such that $h \subseteq Z$. Since $(\geq\!n\,r.D)^\mI(x)$ holds, there exist pairwise different $y_1$, \ldots, $y_n \in \Delta^\mI$ such that $r^\mI(x,y_i)$ and $D^\mI(y_i)$ hold for every $1 \leq i \leq n$. For each $1 \leq i \leq n$, let $y'_i = h(y_i)$. Thus, $Z(y_i,y'_i)$ holds. By the inductive assumption of~\eqref{bs:eqB-2}, it follows that $D^{\mI'}(y'_i)$ holds. Since $r^{\mI'}(x',y')$ and $D^{\mI'}(y'_i)$ hold for $1 \leq i \leq n$, and $y_i \neq y_j$ for $1 \leq i \neq j \leq n$, it follows that \mbox{$(\geq\!n\,r.D)^{\mI'}(x')$} holds, which means $C^{\mI'}(x')$ holds.

\item Case $\{Q,I\} \subseteq \Phi$ and $C = (\geq n\,r^{-1}.D)$, where $D$ is a concept of $\mLPsp$, can be proved analogously to the above case.

\item Case $Q \in \Phi$ and $C =(\leq\!n\,r.(\neg D))$, where $D$ is a concept of $\mLPsp$: For the sake of contradiction, suppose  $C^{\mI'}(x')$ does not hold. Thus, $(\neg C)^{\mI'}(x')$ holds, which means $(\geq\!(n+1)\,r.(\lnot D))^{\mI'}(x')$ holds. By the condition~\eqref{bs:eqQ}, there exists a bijection $h: \{y \mid r^\mI(x,y)\} \to \{y' \mid r^{\mI'}(x',y')\}$ such that $h \subseteq Z$. Since $(\geq\!(n+1)\,r.(\lnot D))^{\mI'}(x')$ holds, there exist pairwise different $y'_1$, \ldots, $y'_{n+1} \in \Delta^{\mI'}$ such that $r^{\mI'}(x',y'_i)$ and $(\neg D)^{\mI'}(y'_i)$ hold for all $1 \leq i \leq n+1$. For each $1 \leq i \leq n+1$, let $y_i = h^{-1}(y'_i)$. Since $h$ is a bijection, $y_1,\ldots,y_{n+1}$ are pairwise different, and by the definition of $h$, $r^\mI(x,y_i)$ holds for every $1 \leq i \leq n+1$. 
For $1 \leq i \leq n+1$, since $(\neg D)^{\mI'}(y'_i)$ holds, by the contrapositive of the inductive assumption of~\eqref{bs:eqB-2}, it follows that $(\neg D)^\mI(y_i)$ holds. Thus, $(\neg C)^\mI(x)$ holds, which contradicts the assumption that $C^\mI(x)$ holds. Therefore, $C^{\mI'}(x')$ holds.

\item Case $\{Q,I\} \subseteq \Phi$ and $C = (\leq\!n\,r^{-1}.(\neg D))$, where $D$ is a concept of $\mLPsp$, can be proved analogously to the above case.

\item Case $U \in \Phi$ and $C = \V U.D$, where $D$ is a concept of $\mLPsp$: Let $y' \in \Delta^{\mI'}$. By the condition~\eqref{bs:eqU2}, there exists $y\in \Delta^\mI$ such that $Z(y,y')$ holds. Since $C^\mI(x)$ holds, it follows that $D^ \mI(y)$ holds. By the inductive assumption of~\eqref{bs:eqB-2}, it follows that $D^{\mI'}(y')$ holds. Hence $C^{\mI'}(x')$ holds.

\item Case $U \in \Phi$ and $C = \E U.D$, where $D$ is a concept of $\mLPsp$:
 Since $C^\mI(x)$ holds, there exists $y \in \Delta^\mI$ such that $D^\mI(y)$ holds. By the condition~\eqref{bs:eqU1}, there exists $y' \in \Delta^{\mI'}$ such that $Z(y,y')$ holds. By the inductive assumption of~\eqref{bs:eqB-2}, it follows that $D^{\mI'}(y')$ holds. Hence $C^{\mI'}(x')$  holds.
\end{itemize}


\subsection*{Proof of Theorem~\ref{theorem: H-M}}

First, suppose $Z$ is an $\mLP$-comparison between $\mI$ and $\mI'$ such that $Z(x,x')$ holds. We show that $x\leqPsp x'$. Let $C$ be an arbitrary concept of $\mLPsp$ such that $C^\mI(x)$ holds. Thus, by the assertion~\eqref{bs:eqB-2} of Lemma~\ref{lemma: bs-inv-1}, $C^{\mI'}(x')$ holds. Therefore, $x\leqPsp x'$.

Conversely, we show that $Z = \{\tuple{x,x'} \in \Delta^\mI \times \Delta^{\mI'} \mid x \leqPsp x'\}$ is an $\mLP$-comparison between $\mI$ and $\mI'$.

\begin{itemize}
\item The condition~\eqref{bs:eqA} immediately follows from the assumption of the theorem.

\item Consider the condition~\eqref{bs:eqB}. If $Z(x,x')$ and $A^\mI(x)$ hold, then by the definition of $Z$, $A^{\mI'}(x')$ holds.

\item Consider the condition~\eqref{bs:eqC}. Suppose $Z(x,x')$ and $r^\mI(x,y)$ hold. Let ${\bf S}=\{y'\in \, \Delta^{\mI'} \mid \, r^{\mI'}(x',y')\}$. We show that there exists $y' \in {\bf S}$ such that $Z(y,y')$ holds. Since $(\E r.\top)^\mI(x)$ holds and $x \leqPsp x'$, it follows that $(\E r.\top)^{\mI'}(x')$ holds. Consequently, ${\bf S} \neq \emptyset$. Since $\mI'$ is finitely branching, ${\bf S}$ must be finite. Let the elements of ${\bf S}$ be $y'_1$, \ldots, $y'_n$. For the sake of contradiction, suppose that for every $1 \leq i \leq n$, $Z(y,y'_i)$ does not hold, which means that $y \not \leqPsp y'_i$. Thus, for every  $1 \leq i \leq n$, there exists a concept $C_i$ of $\mLPsp$ such that $C_i^\mI(y)$ holds, but $C_i^{\mI'}(y')$ does not. Let $C = \E r.(C_1 \mand\ldots\mand C_n)$. Thus, $C^\mI(x)$ holds, but $C^{\mI'}(x')$ does not. This contradicts $x \leqPsp x'$. Hence, there exists $y'_i \in {\bf S}$ such that $Z(y,y'_i)$ holds.

\item Consider the condition~\eqref{bs:eqD}. Suppose $Z(x,x')$ and $r^{\mI'}(x',y')$ hold. Let ${\bf S} = \{y \in \Delta^\mI \mid r^\mI(x,y)\}$. We show that there exists $y\in {\bf S}$ such that $Z(y,y')$ holds. For the sake of contradiction, suppose ${\bf S}=\emptyset$. Thus, $(\V r.\bot)^\mI(x)$ holds. Since $x \leqPsp x'$, it follows that $C^{\mI'}(x')$ holds, and hence $\bot^{\mI'}(y')$ holds, which is a contradiction. Therefore, ${\bf S} \neq \emptyset$. 
Since $\mI$ is finitely branching, ${\bf S}$ must be finite. Let $y_1, \ldots , y_n $  be all the elements of ${\bf S}$. For the sake of contradiction, suppose that for every $1 \leq i \leq n$, $Z_i(y_i,y')$ does not hold, i.e. $y_i \not \leqPsp y'$. Thus, for every $1\leq i \leq n$, there exists a concept $C_i$ of $\mLPsp$ such that $C_i^\mI(y_i)$ holds, but $C_i^{\mI'}(y')$ does not. Let $C=\V r.(C_1 \sqcup \ldots \sqcup C_n)$. Clearly, $C^\mI(x)$ holds, but $C^{\mI'}(x')$ does not. This contradicts $x \leqPsp x'$. Hence, there exists $y_i \in {\bf S}$ such that $Z(y_i,y')$ holds.

\item The conditions~\eqref{bs:eqI1} and~\eqref{bs:eqI2} can be proved analogously as for the conditions~\eqref{bs:eqC} and~\eqref{bs:eqD}, respectively.

\item Consider the condition~\eqref{bs:eqO0} and the case $O \in \Phi$. Suppose $Z(x,x')$ holds and $x = a^\mI$. Since $\{a\}^\mI(x)$ holds and $x \leqPsp x'$, it follows that $\{a\}^{\mI'}(x')$ holds. Therefore, $x' = a^{\mI'}$.

\item Consider the condition~\eqref{bs:eqQ} and the case $Q \in \Phi$. Suppose $Z(x,x')$ holds, i.e., $x \leqPsp x'$. Let ${\bf S} = \{y \in \Delta^\mI \mid r^\mI(x,y)\}$ and ${\bf S}'=\{y'\in \Delta^{\mI'}\mid \,r^{\mI'}(x',y')\}$. Since $\mI$ and $\mI'$ are finitely branching, ${\bf S}$ and ${\bf S'}$ must be finite. Let $m = \# \mathbf{S}$ and $n = \#\mathbf{S'}$. We first show that $m = n$. If $m > n$ then $x \in (\geq\!m\,r.\top)^\mI$ and $x' \notin (\geq\!m\,r.\top)^{\mI'}$, which contradicts $x \leqPsp x'$. If $m < n$ then $x \in (\leq\!m\,r.\lnot\bot)^\mI$ and $x' \notin (\leq\!m\,r.\lnot\bot)^{\mI'}$, which contradicts $x \leqPsp x'$. Therefore $m = n$. 
Let $\mathbf{S} = \{y_1,\ldots,y_m\}$. We can try to construct a bijection $h : \mathbf{S} \to \mathbf{S'}$ such that $h \subseteq Z$ as follows. For each $i$ from 1 to $m\,$:
  \begin{itemize}
  \item If there exists $y' \in \mathbf{S'} \setminus \{h(y_1),\ldots,h(y_{i-1})\}$ such that $Z(y_i,y')$ holds then set $h(y_i) := y'$ and continue with the next $i$. 
  \item Consider the other case. By the assertion~\eqref{bs:eqC}, there exists $y' \in \mathbf{S'}$ such that $Z(y_i,y')$ holds. Nondeterministically choose $1 \leq j < i$ such that $h(y_j) = y'$, exchange $y_i$ and $y_j$, and go back to the previous step.
  \end{itemize}
For the sake of contradiction, suppose that for some $1 \leq i \leq m$, every possible run of the above loop does not terminate. There must exist $\mathbf{S}_0 \subseteq \{y_1,\ldots,y_{i-1}\}$ such that, for every $y \in \mathbf{S}_0 \cup \{y_i\}$ and every $y' \in \mathbf{S'}$, if $Z(y,y')$ holds then $y' \in h(\mathbf{S}_0)$. Let $\mathbf{S}_0 \cup \{y_i\} = \{u_1,\ldots,u_h\}$ and $\mathbf{S'} \setminus h(\mathbf{S}_0) = \{v_1,\ldots,v_k\}$. We have $h + k = m+1$, hence $h > m- k$. For each $1 \leq i \leq h$ and $1 \leq j \leq k$, since $Z(u_i,v_j)$ does not hold, there exists a concept $C_{i,j}$ of $\mLPsp$ such that $C_{i,j}^\mI(u_i)$ holds, but $C_{i,j}^{\mI'}(v_j)$ does not. For $1 \leq i \leq h$, let $C_i = C_{i,1} \mand \ldots \mand C_{i,k}$. Then let $C = C_1 \mor\ldots\mor C_h$. Observe that $\{u_1,\ldots,u_h\} \subseteq C^\mI$ and $\{v_1,\ldots,v_k\} \cap C^{\mI'} = \emptyset$. Thus, $x \in (\geq\!h\,r.C)^\mI$ and $x' \notin (\geq\!h\,r.C)^{\mI'}$, which contradicts the assumption that $x \leqPsp x'$. Therefore, there exists a bijection $h : \mathbf{S} \to \mathbf{S'}$ such that $h \subseteq Z$.  

\item The condition~\eqref{bs:eqQI} can be proved analogously as for the condition~\eqref{bs:eqQ}.

\item Consider the condition~\eqref{bs:eqU1} and the case $U \in \Phi$. By the assumption of this case, either $\IN \neq \emptyset$ and both $\mI$, $\mI'$ are finite, or both $\mI$, $\mI'$ are unreachable-objects-free. 
  \begin{itemize}
  \item Case $\IN \neq \emptyset$ and both $\mI$, $\mI'$ are finite: Let $x \in \Delta^\mI$ and let $x'_1, \ldots , x'_n $ be all the elements of $\Delta^{\mI'}$. For the sake of contradiction, suppose that for every $1 \leq i\leq n$, $x \not \leqPsp x'_i$. Thus, for every $1 \leq i\leq n$, there exists a concept $C_i$ of $\mLPsp$ such that $C_i^\mI(x)$ holds, but $C_i^{\mI'}(x'_i)$ does not. Let $C=C_1 \mand \ldots \mand C_n$ and $a\in \Sigma_I$. Since $C^\mI(x)$ holds, $(\E U.C)^\mI (a^\mI)$ also holds, but $(\E U.C)^{\mI'} (a^{\mI'})$ does not, which contradicts the assumption $a^\mI \leqPsp a^{\mI'}$.
  \item Case both $\mI$, $\mI'$ are unreachable-objects-free: The condition~\eqref{bs:eqU1} follows from the conditions \eqref{bs:eqA}, \eqref{bs:eqC} and \eqref{bs:eqD}. 
  \end{itemize}

\item The condition~\eqref{bs:eqU2} can be proved analogously as for the condition~\eqref{bs:eqU1}.

\item Consider the condition~\eqref{bs:eqSelf} and the case $\Self \in \Phi$. Suppose $Z(x,x')$ and $r^\mI(x,x)$ hold. Since $(\E r.\Self)^\mI(x)$ holds and $x \leqPsp x'$, it follows that $(\E r.\Self)^{\mI'}(x')$ holds. Hence, $r^{\mI'}(x',x')$ holds.
\end{itemize}


\subsection*{Proof of Theorem~\ref{theorem: H-M-3}}

If $Z$ is an $\mLP$-bisimulation between $\mI$ and $\mI'$ such that $Z(x,x')$ holds then, by Theorem~\ref{theorem: H-M-0}, $x \equivP x'$, and hence $x \equivPsp x'$. For the remaining assertions of the current theorem, we show that $Z = \{\tuple{x,x'} \in \Delta^\mI \times \Delta^{\mI'} \mid x \equivPsp x'\}$ is an $\mLP$-bisimulation between $\mI$ and $\mI'$.

\begin{itemize}
\item The condition~\eqref{bs:eqA} immediately follows from the assumption of the theorem.

\item Consider the condition~\eqrefp{bs:eqB}. Suppose $Z(x,x')$ holds. By the definition of $Z$, $A^\mI(x)$ holds iff $A^{\mI'}(x')$ holds.

\item Consider the condition~\eqrefp{bs:eqO0} and the case $O \in \Phi$. Suppose $Z(x,x')$ holds. Thus, $\{a\}^\mI(x)$ holds iff $\{a\}^{\mI'}(x')$ holds. That is, $x = a^\mI$ iff $x' = a^{\mI'}$.

\item Consider the condition~\eqrefp{bs:eqSelf} and the case $\Self \in \Phi$. Suppose $Z(x,x')$ holds. Thus, $(\E r.\Self)^\mI(x)$ holds iff $(\E r.\Self)^{\mI'}(x')$ holds. That is, $r^\mI(x,x)$ holds iff $r^{\mI'}(x',x')$ holds.

\item Consider the condition~\eqref{bs:eqQ} and the case $Q \in \Phi$. Suppose $Z(x,x')$ holds, i.e., $x \equivPsp x'$. Let ${\bf S} = \{y \in \Delta^\mI \mid r^\mI(x,y)\}$ and ${\bf S}'=\{y'\in \Delta^{\mI'}\mid \,r^{\mI'}(x',y')\}$. Since $\mI$ and $\mI'$ are finitely branching, ${\bf S}$ and ${\bf S'}$ must be finite. As shown in the proof of Theorem~\ref{theorem: H-M}, there exists a bijection $h : \mathbf{S} \to \mathbf{S'}$ such that, if $h(y) = y'$ then $y \leqPsp y'$. Analogously, there exists a bijection $h' : \mathbf{S'} \to \mathbf{S}$ such that, if $h'(y') = y$ then $y' \leqPsp y$. Therefore, there must exist a bijection $h_2 : \mathbf{S} \to \mathbf{S'}$ such that, if $h_2(y) = y'$ then $y \equivPsp y'$. 

\item The condition~\eqref{bs:eqQI} can be proved analogously as for the condition~\eqref{bs:eqQ}.

\item The conditions~\eqref{bs:eqC} and~\eqref{bs:eqD} follow from the condition~\eqref{bs:eqQ}.
\item The conditions~\eqref{bs:eqI1} and~\eqref{bs:eqI2} follow from the condition~\eqref{bs:eqQI}.
\item Consider the conditions~\eqref{bs:eqU1} and~\eqref{bs:eqU2} and the case $U \in \Phi$. By assumption, both $\mI$ and $\mI'$ are unreachable-objects-free. The condition~\eqref{bs:eqU1} follows from the conditions \eqref{bs:eqA}, \eqref{bs:eqC} and \eqref{bs:eqD}. Analogously, the condition~\eqref{bs:eqU2} also holds.
\end{itemize}


\subsection*{Proof of Theorem~\ref{theorem: H-M-4}}

Let $Z = \{\tuple{x,x'} \in \Delta^\mI \times \Delta^{\mI'} \mid x \equivPsp x'\}$. Analyzing the proof of Theorem~\ref{theorem: H-M-3}, it suffices to show that the condition~\eqref{bs:eqC} holds (the conditions~\eqref{bs:eqD}, \eqref{bs:eqI1} and~\eqref{bs:eqI2} can be proved in a similar way). 
Suppose $Z(x,x') \land r^\mI(x,y)$ holds. We show that there exists $y'$ such that $Z(y,y') \land r^{\mI'}(x',y')$ holds. This is trivial for the case when $\Self \in \Phi$ and $y = x$. So, suppose $\Self \notin \Phi$ or $y \neq x$. Analogously to the proof of Theorem~\ref{theorem: H-M}, it can be shown that there exists $y'_2 \in \Delta^{\mI'}$ such that $r^{\mI'}(x',y'_2)$ holds and $y \leqPsp y'_2$. Dually, there exists $y'_1 \in \Delta^{\mI'}$ such that $r^{\mI'}(x',y'_1)$ holds and $y'_1 \leqPsp y$. Similarly, there exist $y_1, y_2 \in \Delta^\mI$ such that $r^\mI(x,y_1)$ and $r^\mI(x,y_2)$ hold, $y_1 \leqPsp y'_1$ and $y'_2 \leqPsp y_2$. Hence $y_1 \leqPsp y \leqPsp y_2$. Since $\mI$ is $\mLPsp$-tidy, either $y = y_1$ or $y = y_2$. Since $y_1 \leqPsp y'_1 \leqPsp y$ and $y \leqPsp y'_2 \leqPsp y_2$, it follows that $y \equivPsp y'_1$ or $y \equivPsp y'_2$, which completes the proof.


\end{document}